\documentclass[letterpaper,onecolumn]{IEEEtranBB}   




\usepackage{xspace}
\usepackage{mathrsfs}
\usepackage{amsopn}


\def\tr{\mathop{\rm tr}\nolimits}%
%
%
%
%
%
%
%
%
%
%
%

\newcommand{\Ac}{\mathcal{A}}

\newcommand{\Cc}{\mathcal{C}}

\newcommand{\Mc}{\mathcal{M}}

\newcommand{\Qc}{\mathcal{Q}}

\newcommand{\Sc}{\mathcal{S}}

\newcommand{\Uc}{\mathcal{U}}

\newcommand{\Xc}{\mathcal{X}}
\newcommand{\Yc}{\mathcal{Y}}
\newcommand{\Zc}{\mathcal{Z}}



\newcommand{\Rr}{\mathscr{R}}

\newcommand{\pen}{{P_e^{(n)}}}





\def\eps{\epsilon}


\let\P\relax
\DeclareMathOperator\P{P}

\DeclareMathOperator\C{C}




\newcommand{\N}{\mathcal{N}}


\def\textiid{i.i.d.\@\xspace}
\newcommand\iid{\ifmmode\text{ i.i.d. } \else \textiid \fi}

\newcommand{\Real}{\mathbb{R}}

\newcommand{\ind}{\boldsymbol{1}}


\usepackage[T1]{fontenc}
\usepackage{graphicx}

\graphicspath{{figs/}}   

\usepackage[cmex10]{amsmath}
\usepackage{amssymb, amsthm, gensymb}  
\usepackage[only,llbracket,rrbracket]{stmaryrd}  

\usepackage{float}
\usepackage{flafter}  
\usepackage{accents}

\usepackage{calc}

\newcommand{\fn}{\footnotesize}  
\newcommand{\scrs}{\scriptsize}

\newcommand{%
\input{.pdf_tex}%
}[1]{%
\input{#1.pdf_tex}%
}

\usepackage[tight,small,bf]{subfigure}   

\usepackage{microtype}

\usepackage{caption}
\DeclareCaptionLabelSeparator{periodquad}{. \quad}
\captionsetup{margin=0pt, skip=3mm, font=small, labelfont=bf, labelsep=periodquad}  

\usepackage{anysize}
	\marginsize{4cm}{4cm}{2cm}{2cm}  
	

\usepackage[colorlinks=true,linkcolor=black,anchorcolor=black,citecolor=black,filecolor=black,menucolor=black,urlcolor=black]{hyperref}  
 	\hypersetup{
		pdfauthor = {Bernd Bandemer and Abbas El Gamal},
		pdftitle = {Communication with Disturbance Constraints},
		pdfsubject = {Information theory},
		pdfkeywords = {},
		pdfcreator = {LaTeX}
	}
	
\usepackage{cite}  

\newcommand{\T} {^\mathrm{T}}
\newcommand{\Ex}{ \mathbb{E}  }
\renewcommand{\eps}{\varepsilon}
\newcommand{\abs}[1]{\lvert #1 \rvert}

\renewcommand{\det}[1]{\lvert #1 \rvert}

\providecommand{\abs}[1]{\lvert#1\rvert}

\providecommand{\card}[1]{\lvert#1\rvert}

\newcommand{\anneq}[1]{\overset{\text{(#1)}}{=}}  
\newcommand{\annleq}[1]{\overset{\text{(#1)}}{\leq}}  
\newcommand{\anngeq}[1]{\overset{\text{(#1)}}{\geq}}  

\newcommand{\Typ}{\mathcal T_\varepsilon^{(n)}}
\newcommand{\Typprime}{\mathcal T_{\varepsilon'}^{(n)}}
\newcommand{\Er}{\mathcal E}
\newcommand{\cond}{\,|\,}

\newcommand{\eeq}{\mathop{\phantom{\neq}}}
\newcommand{\Es}{\mathcal E_\text{eq}}

\newcommand{\Lc}{\mathcal{L}}

\newcommand{\Rdk}[1]{R_{\text{d},#1}}
\newcommand{\Rd}{R_{\text{d}}}
\newcommand{\Rl}{R_{\text{leak}}}

\newcommand{\natSet}[1]{[1\hspace{-0.25em}:\hspace{-0.25em}#1]}  
\newcommand{\refines}{\preccurlyeq}  
\newcommand{\modn}[1]{\llbracket #1 \rrbracket}  

\newfloat{tabelle}{t}{tabelle}
\floatname{tabelle}{Table}

\theoremstyle{definition}  
\newtheorem{thm}{Theorem}
    
\newtheorem{lemma}{Lemma}
\newtheorem{corollary}{Corollary}

\theoremstyle{remark}
\newtheorem{remark}{Remark}
\newtheorem{example}{Example}

\title{Communication  with \\ Disturbance Constraints}

\author{Bernd Bandemer and Abbas El Gamal \\
Information Systems Laboratory, Stanford University,  \\
350 Serra Mall, Stanford, CA 94305, USA \\
Email: bandemer@stanford.edu, abbas@ee.stanford.edu
	\thanks{\hrule \vspace{2mm} \noindent This work is partially supported by DARPA ITMANET. Bernd Bandemer is supported by an Eric and Illeana Benhamou Stanford Graduate Fellowship.}
}

\begin{document}
\maketitle

\begin{abstract} 

Motivated by the broadcast view of the interference channel, the new problem of communication with disturbance constraints is formulated. The rate--disturbance region is established for the single constraint case and the optimal encoding scheme turns out to be the same as the Han--Kobayashi scheme for the two user-pair interference channel. This result is extended to the Gaussian vector (MIMO) case. For the case of communication with two disturbance constraints, inner and outer bounds on the rate--disturbance region for a deterministic model are established. The inner bound is achieved by an encoding scheme that involves rate splitting, Marton coding, and superposition coding, and is shown to be optimal in several nontrivial cases. This encoding scheme can be readily applied to discrete memoryless interference channels and motivates a natural extension of the Han--Kobayashi scheme to more than two user pairs.

\end{abstract}

\section{Introduction}

Alice wishes to communicate a message to Bob while causing the least disturbance to nearby Dick, Diane, and Diego, who are not interested in the communication from Alice. Assume a discrete memoryless broadcast channel $p(y,z_1,\dots,z_K|x)$ between Alice $X$, Bob $Y$, and their preoccupied friends $Z_1,\dots,Z_K$ as depicted in Figure~\ref{fig:disturbanceChannelK_oneBox}. We measure the disturbance at side receiver $Z_j$ by the amount of {\em undesired} information rate $(1/n)I(X^n;Z_j^n)$ originating from the sender $X$, and require this rate not to exceed $\Rdk j$ in the limit.  The problem is to determine the optimal trade-off between the message communication rate $R$ and the disturbance rates $\Rdk j$. 

\begin{figure}[h]
	\centering
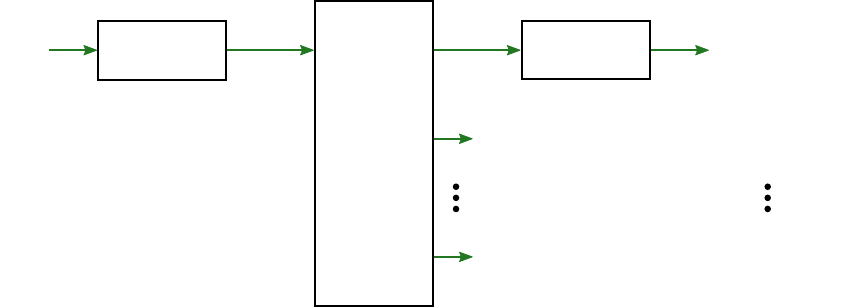%

	\caption{Communication system with disturbance constraints.}
	\label{fig:disturbanceChannelK_oneBox}
\end{figure}
This communication with disturbance constraints problem is motivated by the broadcast  side of the interference channel in which each sender wishes to communicate a message only to one of the receivers while causing the least disturbance to the other receivers. However, in this paper, which is an extended version of~\cite{Bandemer:2011:Disturbance}, we focus on studying the problem of communication with disturbance constraints itself. The application of the coding scheme developed in this paper to deterministic interference channels with more than two user pairs is discussed in~\cite{Bandemer:2011:3DicMarton}. 

For a single disturbance constraint, we show that the optimal encoding scheme is rate splitting and superposition coding, which is the same as the Han--Kobayashi scheme for the two user-pair interference channel~\cite{HanKobayashi81,ChongMotani08}. This motivates us to study communication with more than one disturbance constraint with the hope of finding good coding schemes for interference channels with more than two user pairs. To this end, we establish inner and outer bounds on the rate--disturbance region for the deterministic channel model with two disturbance constraints that are tight in some nontrivial special cases. In the following section we provide needed definitions and present an extended summary of our results. The proofs are presented in subsequent sections, with some parts deferred to the Appendix.


\section{Definitions and main results} \label{sec:defMainresults}
Consider the discrete memoryless communication system with $K$ disturbance constraints (henceforth referred to as DMC-$K$-DC) depicted  in  Figure~\ref{fig:disturbanceChannelK_oneBox}. 
The channel consists of $K+2$ finite alphabets $\Xc$, $\Yc$, $\Zc_j$, $j\in \natSet K$, and a collection of conditional pmfs $p(y,z_1,\dots,z_K| x)$. 
A $(2^{nR},n)$ code for the DMC-$K$-DC consists of the message set $\natSet{2^{nR}}$, an encoding function $x^n: \natSet{2^{nR}} \to \Xc^n$, and a decoding function $\hat m: \Yc^n \to \natSet{2^{nR}}$. We assume that the message $M$ is uniformly distributed over $\natSet{2^{nR}}$. A rate--disturbance tuple $(R,\Rdk 1,\dots,\Rdk K)\in \Real_+^{K+1}$ is achievable for the DMC-$K$-DC if there exists a sequence of $(2^{nR},n)$ codes such that
\begin{align*}
	\lim_{n\to\infty} \P(\hat M \neq M) &= 0, \notag \\
	\limsup_{n\to\infty} \ (1/n) I(X^n;Z_j^n) &\leq \Rdk j, \quad j \in \natSet K.
\end{align*}
The \emph{rate--disturbance region} $\Rr$ of the DMC-$K$-DC is the closure of the set of all achievable tuples $(R,\Rdk 1,\dots,\Rdk K)$. 

\begin{remark} 
	Like the message rate $R$, the disturbance rates $\Rdk j$, for $j\in \natSet K$, are measured in units of bits per channel use. (We use logarithms of base $2$ throughout.)
\end{remark}
\begin{remark}
	The measure of disturbance $(1/n) I(X^n;Z_j^n)$ can be expanded as $(1/n) H(Z_j^n) - (1/n) H(Z_j^n\cond X^n)$. The first term is the entropy rate of the received signal $Z_j$ and is caused by both the transmission itself and by noise inherent to the channel. Subtracting the second term separates out the noise part. (For channels with additive white noise, e.g., the Gaussian case, the second term is exactly the differential entropy of each noise sample.)
\end{remark}
\begin{remark}
	Our results remain essentially true if disturbance is measured by $(1/n) H(Z_j^n)$ instead. If the channel is deterministic, the two measures coincide.
\end{remark}
\begin{remark} 
The disturbance constraint $(1/n) I(X^n;Z_j^n) \leq \Rdk j$ is reminiscent of the information leakage rate constraint for the wiretap channel~\cite{Wyner1975, Csiszar1978}, that is, $(1/n) I(M; Z_j^n) \leq \Rl$. Replacing $M$ with $X^n$, however, dramatically changes the problem and the optimal coding scheme. In the wiretap channel, the key component of the optimal encoding scheme is  randomized encoding, which helps control the leakage rate $(1/n) I(M;Z_j^n)$. Such randomization reduces the achievable transmission rate for a given disturbance constraint, hence is not desirable in our setting.
\end{remark}
The rate--disturbance region is not known in general.  In this paper we establish the following results.

\subsection{Rate--disturbance region for a single disturbance constraint}
Consider the case with a single disturbance constraint, i.e., $K=1$, and relabel $Z_1$ as $Z$ and $\Rdk 1$ as $\Rd$. We  fully characterize the rate--disturbance region for this case. 

\begin{thm} \label{thm:dcCapa}
	The rate--disturbance region $\Rr$ of the DMC-$1$-DC is the set of rate pairs $(R,\Rd)$ such that
	\begin{align*}
		 R  &\leq I(X;Y), \\
		 \Rd &\geq I(X;Z \cond U), \\
		 R -\Rd &\leq I(X;Y\cond U) - I(X;Z\cond U),
	\end{align*}
	for some pmf $p(u,x)$ with $\abs{\Uc} \leq \abs{\Xc} + 1$. 
\end{thm}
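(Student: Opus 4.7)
The plan is to prove achievability and a matching converse, with the auxiliary $U$ playing dual roles: a superposition cloud center on the direct side, and a wiretap-style side information vector on the converse side.

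For achievability, I would split the message as $M = (M_0, M_1)$ with rates $R_0, R_1 \geq 0$ summing to $R$, generate cloud centers $U^n(m_0)$ i.i.d.~$\sim p(u)$ and satellites $X^n(m_0, m_1)$ i.i.d.~$\sim p(x \cond u_i(m_0))$ per coordinate, and use simultaneous joint-typicality decoding at Bob, which succeeds for $R_0 + R_1 < I(X;Y)$ and $R_1 < I(X;Y \cond U)$. The key disturbance bound chains as
\begin{align*}
	I(X^n; Z^n)	&= I(M_0;Z^n) + I(X^n;Z^n \cond M_0) \\
				&\leq n R_0 + I(X^n;Z^n \cond U^n) \leq n R_0 + n\, I(X;Z \cond U),
\end{align*}
where the equality uses that $M_0$ is a deterministic function of $X^n$, the middle inequality uses that conditioning on $M_0$ is at least as informative as conditioning on $U^n$ (which is itself a function of $M_0$), and the final step uses that $(X_i, Z_i)$ are conditionally independent across $i$ given $U^n$. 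Fourier--Motzkin elimination of $(R_0, R_1)$ from $R_0 + R_1 \leq I(X;Y)$, $R_1 \leq I(X;Y \cond U)$, $R_0 \leq R_d - I(X;Z \cond U)$, and nonnegativity then yields precisely the three inequalities in the theorem.

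For the converse, Fano's inequality combined with the usual data-processing and memoryless bounds gives $R \leq (1/n)\sum_i I(X_i;Y_i) + \epsilon_n$, which single-letterizes to $R \leq I(X;Y) + \epsilon_n$. For the two $U$-dependent inequalities I would introduce the wiretap-style auxiliary $U_i \defas (Y^{i-1}, Z_{i+1}^n)$. One application of the Csisz\'ar sum identity (together with the memoryless channel) yields
\begin{align*}
	I(X^n;Y^n) - I(X^n;Z^n) = \sum_{i=1}^n \bigl[ I(X_i;Y_i \cond U_i) - I(X_i;Z_i \cond U_i) \bigr],
\end{align*}
while a separate chain-rule expansion of $I(X^n;Z_i \cond Z_{i+1}^n)$ around $Y^{i-1}$, combined with $I(Y^{i-1}; Z_i \cond X^n, Z_{i+1}^n) = 0$, gives the one-sided inequality $I(X^n;Z^n) \geq \sum_i I(X_i;Z_i \cond U_i)$. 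Setting $U \defas (Q, U_Q)$ with $Q$ uniform on $\natSet{n}$ and $X \defas X_Q$ then produces the matching single-letter bounds on $R_d$ and $R - R_d$; the cardinality bound $\abs{\Uc} \leq \abs{\Xc} + 1$ follows from a Fenchel--Eggleston--Carath\'eodory argument preserving the marginal $p_X$ plus two functionals of $p(u,x)$ for the $U$-dependent rate expressions.

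The main obstacle is ensuring that a \emph{single} auxiliary $U$ simultaneously governs the bounds on $R$, $R_d$, and $R - R_d$. I expect the one-sided lower bound $I(X^n;Z^n) \geq \sum_i I(X_i;Z_i \cond U_i)$ to be the most delicate step, since it does not follow from the Csisz\'ar-sum identity alone and relies on the memoryless structure of the channel to force the cross term $I(Y^{i-1}; Z_i \cond X^n, Z_{i+1}^n)$ to vanish.
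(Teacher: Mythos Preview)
Your proposal is correct and follows essentially the same route as the paper: superposition coding with rate splitting for achievability, and a Csisz\'ar-sum-identity identification of $U$ for the converse. The only cosmetic difference is that you take $U_i = (Y^{i-1}, Z_{i+1}^n)$ whereas the paper uses the mirror choice $U_i = (Y_{i+1}^n, Z^{i-1})$; both are valid since the Csisz\'ar sum identity is symmetric, and your ``delicate'' step $I(X^n;Z^n) \geq \sum_i I(X_i;Z_i \cond U_i)$ indeed goes through exactly as you outline. One small quibble: in the disturbance chain, the equality $I(X^n;Z^n) = I(M_0;Z^n) + I(X^n;Z^n \cond M_0)$ does not actually require $M_0$ to be a function of $X^n$ (which need not hold for a fixed codebook); it follows directly from the Markov chain $M_0 \to X^n \to Z^n$, and the paper sidesteps the issue by averaging over the random codebook and bounding $I(X^n;Z^n \cond \Cc) \leq H(Z^n,M_0 \cond \Cc) - H(Z^n \cond X^n)$.
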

Let $\Rr(U,X)$ be the rate region defined by the rate constraints in the theorem for a fixed joint  pmf $(U,X) \sim p(u,x)$. This rate region is illustrated in Figure~\ref{fig:dcCapa_constituentRegion}.  The rate--disturbance region is simply the union of these regions over all $p(u,x)$ and is convex without the need for a time-sharing random variable.

\begin{figure}[htb]
	\centering
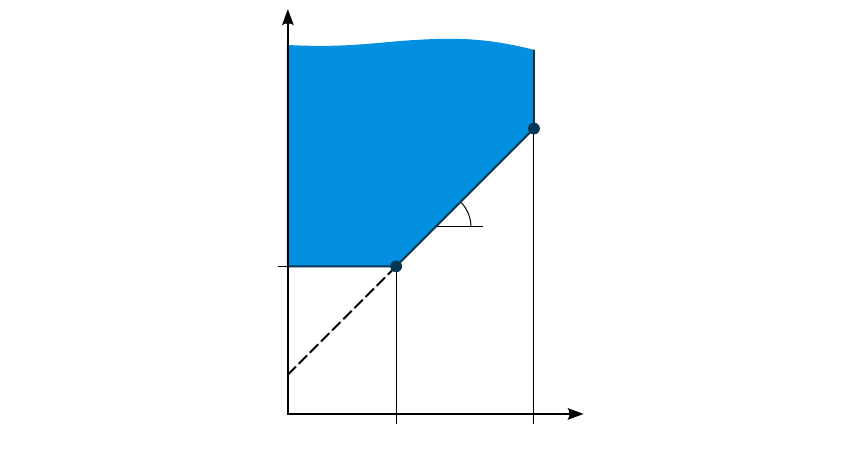%

	\caption{Example of $\Rr(U,X)$, the constituent region of $\Rr$.}
	\label{fig:dcCapa_constituentRegion}
\end{figure}

The proof of Theorem~\ref{thm:dcCapa} is given in Subsections~\ref{sec:achievability_dcCapa} and~\ref{sec:converse_dcCapa}. Achievability is established using rate splitting and superposition coding. Receiver $Y$ decodes the satellite codeword while receiver $Z$ distinguishes only the cloud center. Note that this encoding scheme is identical to the Han--Kobayashi scheme for the two user-pair interference channel~\cite{HanKobayashi81,ChongMotani08}. 

We now consider three interesting special cases.

\medskip
\subsubsection{Deterministic channel} 

Assume that $Y$ and $Z$ are deterministic functions of $X$. We show that the rate--disturbance region in Theorem~\ref{thm:dcCapa} reduces to the following.

\begin{corollary} \label{thm:dcCapa_Det}
	The rate--disturbance region for the deterministic channel with one disturbance constraint is the set of rate pairs $(R,\Rd)$ such that 
	\begin{align*}
		R & \leq H(Y), \\
		R-\Rd & \leq H(Y\cond Z),
	\end{align*}
	for some pmf $p(x)$.
\end{corollary}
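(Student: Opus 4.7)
The plan is to specialize the bounds of Theorem~\ref{thm:dcCapa} to the deterministic setting and then eliminate the auxiliary random variable $U$.

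First I would simplify the three bounds from Theorem~\ref{thm:dcCapa} under the assumption that $Y$ and $Z$ are deterministic functions of $X$. In that case $H(Y\cond X) = H(Z\cond X) = 0$, and also $H(Y\cond X,U) = H(Z\cond X,U) = 0$ for any auxiliary $U$ with $U\to X\to(Y,Z)$. Hence $I(X;Y) = H(Y)$, $I(X;Z\cond U) = H(Z\cond U)$, and $I(X;Y\cond U) - I(X;Z\cond U) = H(Y\cond U) - H(Z\cond U)$. So Theorem~\ref{thm:dcCapa} reduces to the set of $(R,\Rd)$ satisfying
\begin{align*}
R &\leq H(Y), \\
\Rd &\geq H(Z\cond U), \\
R - \Rd &\leq H(Y\cond U) - H(Z\cond U),
\end{align*}
for some $p(u,x)$.

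To obtain the achievability part of the corollary, I would set $U = Z$ (which is admissible since $Z$ is a function of $X$, so the joint pmf $p(u,x)$ is well-defined). Then $H(Z\cond U) = 0$ and $H(Y\cond U) = H(Y\cond Z)$, so the $\Rd$ constraint becomes vacuous and the remaining bounds are exactly $R \leq H(Y)$ and $R - \Rd \leq H(Y\cond Z)$.

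For the converse direction, I would show that for every choice of $p(u,x)$, the specialized bounds above imply the corollary's bounds for the induced marginal $p(x)$. The bound $R \leq H(Y)$ is immediate. For the $R - \Rd$ bound, the key step is the chain rule identity
\begin{equation*}
H(Y,Z\cond U) = H(Y\cond U) + H(Z\cond Y,U) = H(Z\cond U) + H(Y\cond Z,U),
\end{equation*}
which gives $H(Y\cond U) - H(Z\cond U) = H(Y\cond Z,U) - H(Z\cond Y,U) \leq H(Y\cond Z,U) \leq H(Y\cond Z)$, the last step by conditioning reducing entropy. Hence $R - \Rd \leq H(Y\cond Z)$ as required. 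The main (and only nontrivial) step is this chain-rule manipulation; everything else is a straightforward substitution.
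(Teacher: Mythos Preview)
Your proposal is correct and essentially mirrors the paper's argument: the paper specializes Theorem~\ref{thm:dcCapa} to the deterministic case, uses the same chain-rule identity $H(Z\cond U)-H(Y\cond U)=H(Z\cond Y,U)-H(Y\cond Z,U)\ge -H(Y\cond Z)$ to show that no choice of $U$ can beat the bounds of the corollary, and then observes that $U=Z$ attains them. The only cosmetic difference is framing---the paper phrases the converse as minimizing $\Rd$ over $p(u\cond x)$ for fixed $R$ and $p(x)$, whereas you argue region containment directly---but the mathematics is identical.
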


Clearly, this region is convex. Alternatively, the region can be written as the set of rate pairs $(R,\Rd)$ such that
\begin{align*}
	R & \leq H(Y\cond Q), \\
	\Rd & \geq I(Y;Z\cond Q),
\end{align*}
for some joint pmf $p(q,x)$ with $\abs{\Qc} \leq 2$. Corollary~\ref{thm:dcCapa_Det} and the alternative description of the region are established by substituting $U=Z$ in the region of Theorem~\ref{thm:dcCapa} and simplifying the resulting region as detailed in Subsection~\ref{sec:proof_dcCapa_Det}. 

\begin{remark} \label{rem:dc_ic_correspondence}
Consider the injective deterministic interference channel with two user pairs depicted in Figure~\ref{fig:injectiveDet2ic_connection2}. Here, $g_{ij}$ is a function that models the link from transmitter $i$ to receiver $j$, for $i,j \in \{1,2\}$. The combining functions $f_j$ are assumed to be injective in each argument. This setting is a special case of the channel investigated in~\cite{ElGamalCosta82}. This can be seen by merging $g_{11}$ and $f_1$ of Figure~\ref{fig:injectiveDet2ic_connection2} into a function $f'_1$ that maps $(X_1,Z_2)$ to $Y_1$. Likewise, define the function $f'_2$ as the merger of $g_{22}$ and $f_2$. The modified combining functions $f'_1$ and $f'_2$ are injective in $Z_2$ and $Z_1$, respectively, and therefore satisfy the assumptions in~\cite{ElGamalCosta82}. It follows that the Han--Kobayashi scheme where the transmitters use superposition codebooks generated according to $p(z_1)p(x_1|z_1)$ and $p(z_2)p(x_2|z_2)$ achieves the capacity region of the channel in Figure~\ref{fig:injectiveDet2ic_connection2}.

On the other hand, Corollary~\ref{thm:dcCapa_Det} shows that the \emph{same} encoding scheme achieves the disturbance-constrained capacity for the channels $X_1 \to (Y'_1,Z_1)$ and $X_2 \to (Y'_2,Z_2)$, shown as dashed boxes in Figure~\ref{fig:injectiveDet2ic_connection2}. Here, $Y'_1$ and $Y'_2$ are the desired receivers, and $Z_1$ and $Z_2$ are the side receivers associated with disturbance constraints. Note that decodability of the desired messages at receivers $Y_1$ and $Y_2$ in the interference channel certainly implies decodability at $Y'_1$ and $Y'_2$ in the channels with disturbance constraint, respectively.
\end{remark}

\begin{figure}[htb]
	\centering
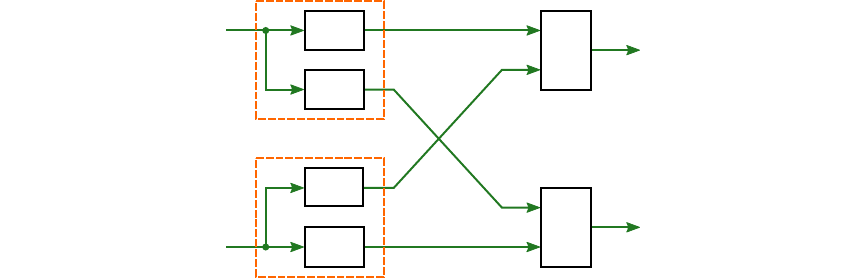%

	\caption{Injective deterministic interference channel with two user pairs. }
	\label{fig:injectiveDet2ic_connection2}
\end{figure}

\begin{example}
Consider the deterministic channel depicted in Figure~\ref{fig:exampleDisturbanceChannel1} and its rate--disturbance region in Figure~\ref{fig:exampleDisturbanceChannel1_region}. Note that rates $R\leq 1$ can be achieved with zero disturbance rate by restricting the transmission to input symbols $\{0,1\}$ (or $\{2,3\}$), which map to different symbols at $Y$, but are indistinguishable at $Z$. On the other hand, for sufficiently large $\Rd$, the disturbance constraint becomes inactive and $R$ is bounded only by the unconstrained capacity $\log(3)$. In addition to the  optimal region achieved by superposition coding, the figure also shows the strictly suboptimal region achieved by simple non-layered random codes.
\end{example}

\begin{figure}[htb]
	\centering
	\subfigure[Channel block diagram]{
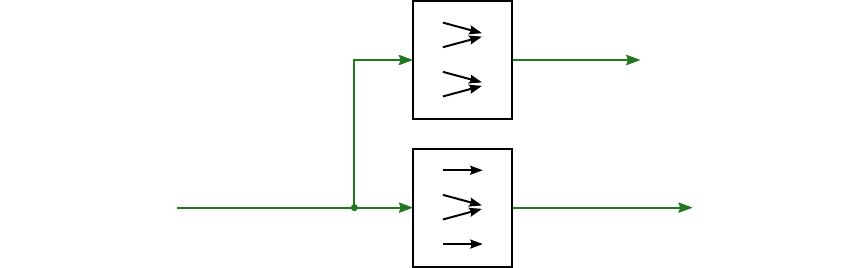%

		\label{fig:exampleDisturbanceChannel1}
	}
	\subfigure[Rate--disturbance region]{
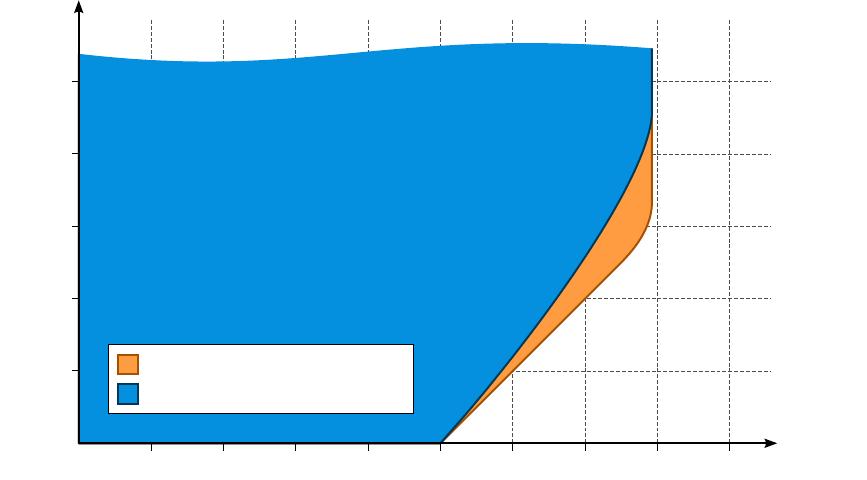%
 
		\label{fig:exampleDisturbanceChannel1_region}
	}
	\caption{Deterministic example with one disturbance constraint. }
	\label{fig:example_1dc}
\end{figure}

\medskip
\subsubsection{Gaussian channel} 

Consider the problem of communication with one disturbance constraint for the Gaussian channel 
\begin{align*}
	Y&=X+ W_1, \\  
	Z&=X+ W_2,
\end{align*}
where the noise is $W_1 \sim \N(0,1)$ and $W_2 \sim \N(0,N)$. Assume an average power constraint $P$ on the transmitted signal $X$. 

The case $N\leq1$ is not interesting, since then $Y$ is a degraded version of $Z$ and the disturbance rate is simply given by the data rate $R$.
If $N>1$, $Z$ is a degraded version of $Y$, and the rate--disturbance region reduces to the following. 

\begin{corollary} \label{thm:dcCapa_Gaussian1}  
	The rate--disturbance region of the Gaussian channel with parameters $P>0$ and $N>1$ is the set of rate pairs $(R,\Rd)$ such that
	\begin{align*} 
		 R  &\leq \C(\alpha P), \\
		 \Rd &\geq \C(\alpha P/N),
	\end{align*}
	for some $\alpha \in [0,1]$, where $\C(x)=(1/2) \log (1 +x)$ for $x \ge 0$.
\end{corollary}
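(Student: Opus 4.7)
The plan is to extend Theorem~\ref{thm:dcCapa} to the Gaussian setting with an average power constraint $\Ex[X^2]\le P$ via the standard discretization of the input/output alphabets, and then to evaluate the resulting region with a jointly Gaussian choice of $(U,X)$ for achievability and with the conditional entropy power inequality (EPI) for the converse. Because $N>1$, the side receiver $Z$ is a stochastically degraded version of $Y$, so we may write $Z=Y+W'$ with $W'\sim\Norm(0,N-1)$ independent of $Y$; this degraded form is exactly the situation in which conditional EPI is tight at Gaussian inputs, which is what will make the inner and outer bounds meet.

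For achievability, the natural Gaussian instantiation of the superposition scheme in Theorem~\ref{thm:dcCapa} is to take independent $U\sim\Norm(0,(1-\alpha)P)$ and $V\sim\Norm(0,\alpha P)$ and set $X=U+V$. A direct calculation gives $I(X;Y)=\C(P)$, $I(X;Z\cond U)=\C(\alpha P/N)$, and $I(X;Y\cond U)-I(X;Z\cond U)=\C(\alpha P)-\C(\alpha P/N)$, so the corner $(R,\Rd)=(\C(\alpha P),\C(\alpha P/N))$ satisfies all three inequalities of Theorem~\ref{thm:dcCapa}. Sweeping $\alpha$ over $[0,1]$ traces out the boundary claimed in the corollary, and the rest of the region follows from the trivial monotonicity in $\Rd$.

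For the converse, I would fix any admissible auxiliary $U$ and input $X$ with $\Ex[X^2]\le P$ and parameterize by choosing $\alpha\in[0,1]$ so that $I(X;Y\cond U)=\C(\alpha P)$; this is well defined because $h(Y\cond U)$ is sandwiched between $h(W_1\cond U,X)=\tfrac12\log(2\pi e)$ and $h(Y)\le\tfrac12\log(2\pi e(P+1))$. Applying the conditional EPI to $Z=Y+W'$ then yields
$2^{2h(Z\cond U)}\ge 2^{2h(Y\cond U)}+2^{2h(W')}=2\pi e(\alpha P+N)$, so that $I(X;Z\cond U)\ge\C(\alpha P/N)$. Substituting into the bounds of Theorem~\ref{thm:dcCapa} produces $\Rd\ge\C(\alpha P/N)$ together with $R\le\Rd+\C(\alpha P)-\C(\alpha P/N)$. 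Since the gap $\C(\alpha P)-\C(\alpha P/N)=\tfrac12\log\bigl(N(1+\alpha P)/(N+\alpha P)\bigr)$ is increasing in $\alpha$ when $N>1$, one final step picks the $\alpha'\in[0,1]$ matching $\Rd=\C(\alpha' P/N)$ (or $\alpha'=1$ if $\Rd\ge\C(P/N)$), and the monotonicity gives $R\le\C(\alpha' P)$, which is exactly the region of the corollary.

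The main obstacle I anticipate is twofold: first, carefully transferring Theorem~\ref{thm:dcCapa} and in particular its converse to the continuous Gaussian alphabet under an expected-power constraint, including an appropriate handling of the cardinality bound on $U$; and second, lining up the parameterization of $\alpha$ from the conditional entropy $h(Y\cond U)$ so that the single conditional-EPI step closes exactly the right gap. Everything else is routine algebra once the degraded representation $Z=Y+W'$ and the Gaussian auxiliary are in place.
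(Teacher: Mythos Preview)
Your proposal is correct, but it takes a different route from the paper in both directions.

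For achievability, the paper uses a simpler scheme: just a single-layer Gaussian codebook with power $\alpha P$ (pure power control), noting explicitly that this is a \emph{degenerate} Han--Kobayashi scheme. Your two-layer Gaussian superposition with $U\sim\Norm(0,(1-\alpha)P)$ of course works too, but the cloud-center layer is doing nothing for the corollary's region; the paper's point is precisely that in the degraded case $N>1$ the auxiliary $U$ is unnecessary.

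For the converse, the difference is more substantial. You first push Theorem~\ref{thm:dcCapa} through to the Gaussian setting, then fix $(U,X)$, parameterize $\alpha$ via $I(X;Y\cond U)=\C(\alpha P)$, and apply the \emph{conditional} (scalar) EPI to $Z=Y+W'$. The paper bypasses Theorem~\ref{thm:dcCapa} entirely: it works directly with the operational $n$-letter quantities, sets $\alpha^\star$ so that $R=\C(\alpha^\star P)$, lower-bounds $h(Y^n)$ from Fano, and then applies the \emph{vector} EPI to $Z^n=Y^n+\tilde W_2^n$ to lower-bound $h(Z^n)$ and hence $\Rd$. This sidesteps exactly the obstacle you flagged (extending the single-letter converse, with its auxiliary $U$ and cardinality bound, to continuous alphabets under a power constraint), and it is shorter. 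Your route has the advantage of being the natural specialization of the general machinery and of previewing the argument for the vector Gaussian case in Theorem~\ref{thm:dcCapa_GaussianVector1}, where the paper does go through Theorem~\ref{thm:dcCapa} and an extremal inequality.

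One small gap to patch in your final step: when $\Rd\ge\C(P/N)$ and you set $\alpha'=1$, the two inequalities you carried ($\Rd\ge\C(\alpha P/N)$ and $R-\Rd\le\C(\alpha P)-\C(\alpha P/N)$) do not by themselves force $R\le\C(P)$, since $\Rd$ is unbounded above. You need to also invoke the first inequality of Theorem~\ref{thm:dcCapa}, $R\le I(X;Y)\le\C(P)$, in that case.
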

Achievability is proved using Gaussian codes with power $\alpha P$. The converse follows by defining $\alpha^\star \in [0,1]$ such that $R=\C(\alpha^\star P)$ and applying the vector entropy power inequality to $Z^n = Y^n + \tilde W_2^n$, where $\tilde W_2 \sim \N(0,N-1)$ is the excess noise. The details are given in Subsection~\ref{sec:proof_dcCapa_Gaussian1}. Note that this is a degenerate form of the Han--Kobayashi scheme because the constraint from the multiple access side of the interference channel is not taken into consideration.

\medskip
\subsubsection{Vector Gaussian channel}

Now consider the vector Gaussian channel with one disturbance constraint
\begin{align*}
	Y&=X+ W_1, \\ 
	Z&=X+ W_2,
\end{align*}
where $X \in \Real^n$ and the noise $W_1 \sim \N(0,K_1)$ and $W_2 \sim \N(0,K_2)$ for some positive semidefinite covariance matrices $K_1,K_2 \in \Real^{n\times n}$. Assume an average transmit power constraint $\tr(K_x) \leq P$, where $K_x = \Ex{(XX\T)}$ is the covariance matrix of $X$. 
This case is not degraded in general.

\begin{thm} \label{thm:dcCapa_GaussianVector1}  
	The rate--disturbance region of the Gaussian vector channel with parameters $P$, $K_1$, and $K_2$ is the convex hull of the set of pairs $(R,\Rd)$ such that
	\begin{align*} 
		 R  &\leq \tfrac 1 2 \log \frac {\det{K_u + K_v + K_1}}{\det{K_1}}, \\
		 R - \Rd & \leq \tfrac 1 2 \log \frac {\det{K_v+K_1}}{\det{K_v + K_2}} \frac {\det{K_2}}{\det{K_1}} , \\
		 \Rd &\geq \tfrac 1 2 \log \frac {\det{K_v + K_2}}{\det{K_2}}.
	\end{align*}
	for some positive semidefinite matrices $K_u, K_v \in \Real^{n\times n}$ with $\tr(K_u+K_v)\leq P$.  
\end{thm}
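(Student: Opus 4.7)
The plan is to invoke Theorem~\ref{thm:dcCapa} (extended to continuous alphabets by a standard discretization argument) with jointly Gaussian auxiliaries on the achievability side, and on the converse side to show that Gaussian inputs are optimal via a channel-enhancement / extremal-inequality argument, which is needed because the channel $X \to (Y,Z)$ is not degraded in general for arbitrary $K_1, K_2$.

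For achievability I would take $U \sim \Ncal(0, K_u)$ and $V \sim \Ncal(0, K_v)$ independent with $K_u, K_v \succeq 0$ and $\tr(K_u + K_v) \leq P$, and set $X = U + V$. Using $h(\Ncal(0,K)) = \tfrac12 \log \det{2\pi e K}$, a direct computation yields
\begin{align*}
I(X;Y) &= \tfrac12 \log \frac{\det{K_u + K_v + K_1}}{\det{K_1}}, \\
I(X;Z \cond U) &= \tfrac12 \log \frac{\det{K_v + K_2}}{\det{K_2}}, \\
I(X;Y \cond U) - I(X;Z \cond U) &= \tfrac12 \log \frac{\det{K_v + K_1} \det{K_2}}{\det{K_v + K_2} \det{K_1}},
\end{align*}
which match the three bounds in the theorem. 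The convex hull operation is required because the constituent region for fixed $(K_u, K_v)$ is not convex in general, and time-sharing between two distinct Gaussian allocations cannot be absorbed into a single Gaussian pair.

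For the converse, Theorem~\ref{thm:dcCapa} supplies, for any achievable $(R, \Rd)$, a distribution $p(u,x)$ on $\Real^n \times \Real^n$ with $\Ex{\tr(XX\T)} \leq P$ satisfying $R \leq I(X;Y)$, $\Rd \geq I(X;Z\cond U)$, and $R - \Rd \leq I(X;Y\cond U) - I(X;Z\cond U)$. Letting $K_x = \Ex{XX\T}$, Gaussian maximum entropy immediately gives $I(X;Y) \leq \tfrac12 \log \det{K_x + K_1}/\det{K_1}$. The remaining task is to exhibit a PSD matrix $K_v \preceq K_x$ such that $I(X;Z\cond U) \geq \tfrac12 \log \det{K_v + K_2}/\det{K_2}$ and $I(X;Y\cond U) - I(X;Z\cond U) \leq \tfrac12 \log \det{K_v+K_1}\det{K_2}/(\det{K_v+K_2}\det{K_1})$ hold with a \emph{common} $K_v$; setting $K_u \defas K_x - K_v$ then matches all three bounds and places $(R,\Rd)$ in the Gaussian constituent region indexed by $(K_u,K_v)$.

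The main obstacle is precisely this simultaneous Gaussianization: because neither $Z$ nor $Y$ is generally a degraded version of the other, the scalar-style EPI argument used in Subsection~\ref{sec:proof_dcCapa_Gaussian1} does not extend. I would address this by the channel-enhancement technique of Weingarten--Steinberg--Shamai combined with a Liu--Viswanath-type extremal inequality: enhance the two noise covariances to a common $\tilde K$ making the auxiliary channel degraded, apply the extremal inequality to certify that a Gaussian $(U,X)$ attains the boundary of every weighted-sum-rate objective on the enhanced channel, verify the KKT conditions linking the enhanced and original channels, and take the certifying conditional covariance as $K_v$. Convexifying over all admissible $(K_u, K_v)$ then yields the claimed region.
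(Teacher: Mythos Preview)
Your achievability argument matches the paper's exactly: jointly Gaussian $U\sim\N(0,K_u)$, $V\sim\N(0,K_v)$, $X=U+V$, plugged into Theorem~\ref{thm:dcCapa}.

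For the converse, your route diverges from the paper's and is more elaborate than necessary. The paper does \emph{not} use channel enhancement. Instead, it exploits convexity of $\Rr$ to reduce the converse to maximizing $R-\lambda\Rd$ over $p(u,x)$ for each $\lambda>0$. For $\lambda\le 1$ the inner optimum sits at corner $A(U,X)$, for $\lambda>1$ at corner $B(U,X)$; in each case the objective becomes a linear combination of $h(X+W_1\cond U)-h(X+W_2\cond U)$ (or its weighted variant) and $h(X+W_1)$, and the Liu--Viswanath extremal inequality \cite[Thm.~8]{Liu2007} applies \emph{directly} to show a Gaussian $X$ (independent of $U$) is optimal---no enhancement to a degraded channel is needed, because that extremal inequality already covers the non-degraded case. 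The paper then handles the trace constraint by writing $\{K_x:\tr K_x\le P\}$ as a union of sets $\{K_x:K_x\preceq S\}$.

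Your plan to find, for each fixed $p(u,x)$, a \emph{single} $K_v$ simultaneously certifying both conditional bounds is a stronger pointwise statement than what is needed, and it is not how the Weingarten--Steinberg--Shamai machinery is normally deployed either (that technique also works at the level of supporting hyperplanes, not pointwise per input distribution). Your approach could likely be made to work, but it introduces an enhancement step and KKT matching that the paper shows are avoidable: the supporting-hyperplane reduction plus a direct appeal to \cite[Thm.~8]{Liu2007} is both shorter and sidesteps the simultaneous-$K_v$ issue entirely.
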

Achievability of this rate--disturbance region is shown by applying Theorem~\ref{thm:dcCapa}. Using the discretization procedure in~\cite{ElGamalKim}, it can be shown that the theorem continues to hold with the power constraint additionally applied to the set of permissible input distributions. The claimed region then follows by considering the special case where the input distribution $p(u,x)$ is jointly Gaussian. To prove the converse, we use an extremal inequality in~\cite{Liu2007} to show that Gaussian input distributions are sufficient. The details of the proof are given in Subsection~\ref{sec:proof_dcCapa_GaussianVector1}.

\subsection{Inner and outer bounds for the deterministic channel with two disturbance constraints}
The correspondence between optimal encoding for the channel with one disturbance constraint and the Han--Kobayashi scheme for the interference channel suggests that the optimal coding scheme for $K$ disturbance constraints may provide an efficient (if not optimal) scheme for the interference channel with more than two user pairs. This is particularly the case for extensions of the two user-pair injective deterministic interference channel for which Han--Kobayashi is optimal~\cite{ElGamalCosta82} (see Remark~\ref{rem:dc_ic_correspondence}). As such, we restrict our attention to the deterministic version of the DMC-$2$-DC. 

First, we establish the following inner bound on the rate--disturbance region. 
\begin{thm}[Inner bound] \label{thm:dcCapa_2}
	The rate--disturbance region $\Rr$ of the deterministic channel with two disturbance constraints is inner-bounded by the set of rate triples $(R,\Rdk 1,\Rdk 2)$ such that
	\begin{align}
		 R  &\leq H(Y), \label{eq:dcCapa2_cond1} \\
		 \Rdk 1 + \Rdk 2 &\geq I(Z_1; Z_2\cond U), \label{eq:dcCapa2_cond2} \\
		 R - \Rdk 1 &\leq H(Y\cond Z_1, U), \label{eq:dcCapa2_cond3} \\
		 R - \Rdk 2 &\leq H(Y\cond Z_2, U), \label{eq:dcCapa2_cond4} \\
		 R - \Rdk 1 - \Rdk 2  &\leq H(Y\cond Z_1, Z_2, U)  - I(Z_1; Z_2\cond U), \label{eq:dcCapa2_cond5}\\
		 2R - \Rdk 1 - \Rdk 2  &\leq H(Y\cond Z_1, Z_2, U)  + H(Y\cond U) \notag \\ 
		 & \qquad \qquad- I(Z_1; Z_2\cond U), \label{eq:dcCapa2_cond6}
	\end{align}
	for some pmf $p(u,x)$.
\end{thm}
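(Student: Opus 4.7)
The plan is to achieve the region by a Han--Kobayashi--Marton scheme that combines rate splitting, superposition coding, and mutual covering, and then to reduce the resulting system of inequalities by Fourier--Motzkin elimination. Fix a pmf $p(u,x)$ and split the message $M$ into $(M_0, M_1, M_2)$ at rates $(R_0, R_1, R_2)$, with $R = R_0 + R_1 + R_2$. I would generate $2^{nR_0}$ cloud-center codewords $u^n(m_0)$ \iid from $p_U$, and for each $(m_0, m_j)$ with $j \in \{1,2\}$ generate $2^{nT_j}$ satellite codewords $z_j^n(m_0, m_j, t_j)$ conditionally \iid from $p(z_j|u)$, where $p(u, z_1, z_2)$ is the marginal induced by $p(u,x)$ through the deterministic maps $z_j = g_j(x)$. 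To encode a message $(m_0, m_1, m_2)$, invoke the mutual covering lemma to select indices $(t_1, t_2)$ making $(u^n, z_1^n, z_2^n)$ jointly typical; this succeeds with high probability as long as $T_1 + T_2 > I(Z_1; Z_2 \cond U)$. A consistent channel input $x^n$ is then drawn componentwise from $p(x|u, z_1, z_2)$; by the conditional typicality (Markov) lemma it is jointly typical with $(u^n, z_1^n, z_2^n)$, and since the channel is deterministic it reproduces $z_j^n$ at side receiver $j$ exactly.

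Receiver $Y$ performs joint typicality decoding to recover $(m_0, m_1, m_2)$, accepting any compatible bin indices $(t_1, t_2)$. A standard union bound over error events (indexed by which of $m_0, m_1, m_2$ differ from the truth, and whether each $t_j$ matches) yields a family of MAC-like constraints involving the sums $R_0$, $R_1 + T_1$, $R_2 + T_2$ on the left-hand side and conditional mutual informations $I(\,\cdot\,;Y\cond\,\cdot\,)$ on the right; the error event in which both $m_1$ and $m_2$ are wrong additionally contributes the Marton packing penalty $I(Z_1; Z_2 \cond U)$ through the two independent wrong-bin lookups. Because the channel is deterministic, the disturbance satisfies $(1/n) I(X^n; Z_j^n) = (1/n) H(Z_j^n)$, which is bounded by the logarithm of the $z_j$-codebook size, giving $\Rdk j \ge R_0 + R_j + T_j$ for $j = 1, 2$.

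It then remains to eliminate the five auxiliary rates $R_0, R_1, R_2, T_1, T_2$, subject to nonnegativity, the covering constraint $T_1 + T_2 \ge I(Z_1; Z_2 \cond U)$, the decoding constraints, the disturbance constraints, and $R = R_0 + R_1 + R_2$, and project onto $(R, \Rdk 1, \Rdk 2)$. Grouping the resulting conditional entropies should yield exactly \eqref{eq:dcCapa2_cond1}--\eqref{eq:dcCapa2_cond6}. The main obstacle is the bookkeeping of this Fourier--Motzkin projection: the Marton bound is a \emph{lower} bound on $T_1 + T_2$, while the decoding constraints place \emph{upper} bounds on individual $R_j + T_j$, and each disturbance constraint couples $T_j$ to $\Rdk j - R_0 - R_j$; the compound inequalities \eqref{eq:dcCapa2_cond5} and \eqref{eq:dcCapa2_cond6} emerge only after combining several decoding constraints, the covering lower bound, and both disturbance constraints simultaneously. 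The existence of a consistent $x^n$ matching the jointly typical $(u^n, z_1^n, z_2^n)$ in the deterministic channel is a comparatively minor point, handled by the Markov lemma.
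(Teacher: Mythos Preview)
Your scheme is close to the paper's but has a genuine gap: you are missing an entire codebook layer. The paper splits $M$ into \emph{four} parts $(M_0,M_1,M_2,M_3)$ and, after selecting a jointly typical pair $(z_1^n,z_2^n)$ via mutual covering, superposes a further codeword $x^n(m_0,l_1,l_2,m_3)$ drawn from $p(x\cond u,z_1,z_2)$ that carries the additional message $M_3$ at rate up to $H(Y\cond Z_1,Z_2,U)$. In your scheme $x^n$ is generated the same way but carries no message, so $R=R_0+R_1+R_2$ is entirely accounted for by indices that also enter one of the disturbance bounds $\Rdk j\ge R_0+R_j+T_j$. Concretely, combining your two disturbance inequalities with the covering bound $T_1+T_2\ge I(Z_1;Z_2\cond U)$ gives
\[
R-\Rdk1-\Rdk2 \;\le\; -R_0-T_1-T_2 \;\le\; -I(Z_1;Z_2\cond U),
\]
which is strictly weaker than \eqref{eq:dcCapa2_cond5} whenever $H(Y\cond Z_1,Z_2,U)>0$. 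The extra layer $M_3$ is precisely what allows rate that reaches $Y$ without inflating either $\Rdk1$ or $\Rdk2$, and without it the Fourier--Motzkin projection cannot produce the region claimed.

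There is a second, more subtle issue. You describe the error analysis at $Y$ as a ``standard union bound'' yielding ``MAC-like constraints.'' In ordinary Marton coding each receiver decodes only \emph{one} of the two auxiliary sequences, so the random selection step never interferes with the packing analysis. Here $Y$ must decode the entire Marton tuple, and when, say, $m_2\neq 1$ but $m_1=1$, the selected index $l_1^{(1,1,m_2)}$ may coincide with the true $l_1^{(1,1,1)}$, and even when it does not, the selection procedure correlates the competing $z_1^n$ with the transmitted one. The paper handles this by (i) declaring an additional encoding ``error'' $\Er_\text{e2}$ whenever two jointly typical pairs share a row or column of the product bin, controlled via a birthday-problem bound (Lemma~\ref{lemma:encodingError}), and (ii) invoking an independence lemma (Lemma~\ref{lemma:independenceLemma}) to show that, conditioned on $\Er_\text{e2}^c$, the non-selected $z_1^n$ candidates are independent of the selected one. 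Without these two ingredients the packing bounds you need for \eqref{eq:Per2}--\eqref{eq:Per4} do not follow from a routine argument.
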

The inner bound is convex. The expression $I(Z_1; Z_2\cond U)$ appears in three of the inequalities. As in  Marton coding for the 2-receiver broadcast channel with a common message, it is the penalty incurred in encoding independent messages via correlated sequences. The region $\Rr(U,X)$ defined by the inequalities in the theorm for a fixed $p(u,x)$ is illustrated in Figure~\ref{fig:dcCapa2_constituentRegion}. 
\begin{figure}[h]
	\centering
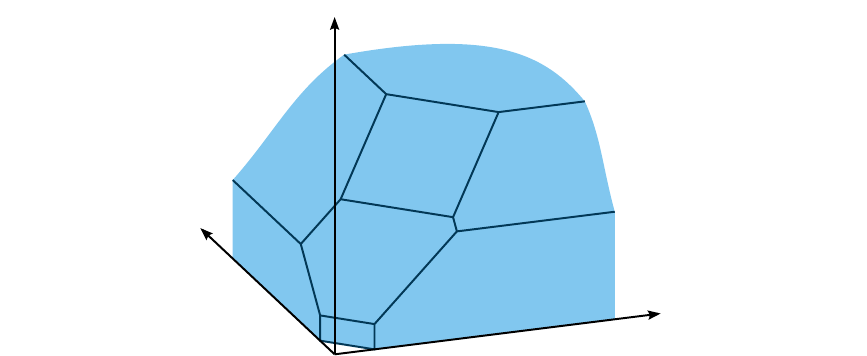%

	\caption{Region $\Rr(U,X)$ for Theorem~\ref{thm:dcCapa_2}. Each face is annotated by the inequality that defines it. }
	\label{fig:dcCapa2_constituentRegion}
\end{figure}


\begin{remark} \label{rem:cond6_alt}
	The right-hand side of condition~\eqref{eq:dcCapa2_cond6} can be equivalently expressed as
	\begin{align*}
		&H(Y\cond Z_1, Z_2, U)  + H(Y\cond U) - I(Z_1; Z_2\cond U) \\
		&\quad \quad =  H(Y\cond Z_1, U)  + H(Y\cond Z_2, U) - I(Z_1; Z_2\cond U,Y),
	\end{align*}
	This shows that the condition is stricter than the sum of conditions~\eqref{eq:dcCapa2_cond3} and~\eqref{eq:dcCapa2_cond4}.
\end{remark}

The encoding scheme for Theorem~\ref{thm:dcCapa_2} involves rate splitting, Marton coding, and superposition coding. The analysis of the probability of error, however, is complicated by the fact that receiver $Y$ wishes to decode all parts of the message as detailed in Subsection~\ref{sec:proof_dcCapa_2}. Receivers $Z_1$ and $Z_2$ each observe a satellite codeword from a superposition codebook. 

\begin{remark}
The encoding scheme underlying the inner bound of Theorem~\ref{thm:dcCapa_2} can be readily extended to the general (non-deterministic) DMC-$2$-DC. 
\end{remark}

To complement the inner bound, we establish the following outer bound on the rate--disturbance region of the deterministic channel with two disturbance constraints.
\begin{thm}[Outer bound] \label{thm:2dc_OuterBound}
If a rate triple $(R,\Rdk 1,\Rdk 2)$ is achievable for the deterministic channel with two disturbance constraints, then it must satisfy the conditions
	\begin{align*}
		R & \leq H(Y\cond Q), \\
		\Rdk 1 &\geq I(Y; Z_1 \cond Q), \\
		\Rdk 2 &\geq I(Y; Z_2 \cond Q), 
	\end{align*}
	for some pmf $p(q,x)$ with $\abs{\Qc} \leq 3$.   
\end{thm}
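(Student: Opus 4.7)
The plan is a standard weak converse via Fano's inequality, simplified by the deterministic structure, followed by a non-trivial single-letterization. Because the channel is deterministic, $I(X^n;Z_j^n) = H(Z_j^n)$, so the disturbance constraint reads $H(Z_j^n) \leq n\Rdk j + n\delta_n$ for all sufficiently large $n$. Fano's inequality combined with uniform $M$ and vanishing error probability gives $nR \leq I(M;Y^n) + n\epsilon_n \leq H(Y^n) + n\epsilon_n$.

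Next, I would let $T$ be uniform on $\natSet n$ and independent of everything else, and set $Q \defas (T,Y^{T-1})$ together with $X \defas X_T$, $Y \defas Y_T$, and $Z_j \defas Z_{j,T}$. The chain rule immediately gives $(1/n) H(Y^n) = H(Y\cond Q)$, handling the $R$-inequality. For the disturbance inequalities I would chain
\begin{align*}
H(Z_j^n) &\geq I(Y^n;Z_j^n) = \sum_{i=1}^n I(Y_i; Z_j^n \cond Y^{i-1}) \\
&\geq \sum_{i=1}^n I(Y_i; Z_{j,i} \cond Y^{i-1}) = n\, I(Y;Z_j\cond Q),
\end{align*}
where the first step is $I \leq H$, the equality is the chain rule on $Y^n$, and the last inequality uses that conditioning $Y_i$ on all of $Z_j^n$ (and $Y^{i-1}$) reduces entropy at least as much as conditioning on only $Z_{j,i}$ (and $Y^{i-1}$). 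Combined with the disturbance bound above, this yields $\Rdk j \geq I(Y;Z_j\cond Q) - \delta_n$. Since $Y$, $Z_1$, $Z_2$ are deterministic functions of $X$ at every time index, the resulting joint pmf $p(q,x)$ satisfies all three claimed inequalities in the limit $n\to\infty$.

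The main obstacle is choosing a \emph{single} auxiliary $Q$ that works simultaneously for all three bounds. Including $Y^{T-1}$ inside $Q$ is essential, so that the chain-rule decomposition of $H(Y^n)$ collapses into $n H(Y\cond Q)$; crucially, however, $Q$ must \emph{not} contain any of the $Z_j$-past, or else the telescoping lower bound on $H(Z_j^n)$ would not arrive at the clean form $I(Y;Z_j\cond Q)$. Finally, to shrink $|\Qc|$ from its initial exponentially large alphabet down to at most $3$, I would invoke the Fenchel--Eggleston strengthening of Carath\'eodory's theorem, applied to the three continuous functionals $H(Y\cond Q{=}q)$, $I(Y;Z_1\cond Q{=}q)$, $I(Y;Z_2\cond Q{=}q)$ of the conditional pmf $p(x\cond q)$ over the connected input simplex.
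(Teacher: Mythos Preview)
Your proof is correct and essentially identical to the paper's: the same auxiliary $Q=(T,Y^{T-1})$, the same chain-rule single-letterization of $H(Y^n)$ and of $I(Y^n;Z_j^n)$, and the same Carath\'eodory-type reduction for the cardinality bound. The only cosmetic difference is that the paper starts the disturbance chain from $I(X^n;Z_j^n)\geq I(Y^n;Z_j^n)$ via data processing rather than from $H(Z_j^n)\geq I(Y^n;Z_j^n)$, which is equivalent here since the channel is deterministic.
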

The proof of this outer bound is given in Subsection~\ref{sec:2dc_OuterBound_Proof}. Note that this outer bound is very similar in form to the alternative description of Corollary~\ref{thm:dcCapa_Det} for the single-constraint deterministic case.

The inner bound in Theorem~\ref{thm:dcCapa_2} and the outer bound in Theorem~\ref{thm:2dc_OuterBound} coincide in some special cases. To discuss these, we introduce the following notation. Since all channel outputs are functions of $X$, they can be equivalently thought of as set partitions of the input alphabet $\Xc$. Set partitions form a partially ordered set (poset) under the refinement relation. Since this poset is a complete lattice~\cite{StanleyEnumerativeCombinatorics}, the following concepts are well-defined. For two set partitions (functions) $f$ and $g$, let $f \refines g$ denote that $f$ is a refinement of $g$ (equivalently, $g$ is degraded with respect to $f$), let $f \wedge g$ be the intersection of the two set partitions (the function that returns both $f$ and $g$), and let $f \vee g$ denote the finest set partition of which both $f$ and $g$ are refinements (the G\'acs--K\"orner--Witsenhausen common part of $f$ and $g$, cf.~\cite{GacsKoerner1973,Witsenhausen1975}).


The inner bound of Theorem~\ref{thm:dcCapa_2} coincides with the outer bound of Theorem~\ref{thm:2dc_OuterBound} if $Z_1$ or $Z_2$ is a degraded version of $Y \wedge (Z_1 \vee Z_2)$, i.e., if the output $Y$ together with the common part of $Z_1$ and $Z_2$ determine $Z_1$ or $Z_2$ completely. 

\begin{thm} \label{thm:dcCapa_2_YZ1Z2}
	The rate--disturbance region $\Rr$ of the deterministic channel with two disturbance constraints is given by the outer bound of Theorem~\ref{thm:2dc_OuterBound} if
	\begin{align*} 
		Y \wedge (Z_1 \vee Z_2) & \refines Z_1, \quad \text{or} \\
		Y \wedge (Z_1 \vee Z_2) & \refines Z_2.
	\end{align*}
\end{thm}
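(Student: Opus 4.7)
The plan is to establish the theorem by showing that, under the refinement hypothesis, the inner bound of Theorem~\ref{thm:dcCapa_2} already reaches every point of the outer bound of Theorem~\ref{thm:2dc_OuterBound}; the reverse inclusion is the outer bound itself.

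By the $Z_1 \leftrightarrow Z_2$ symmetry of both bounds, I would treat only the case $Y \wedge (Z_1 \vee Z_2) \refines Z_1$ and write $W \defas Z_1 \vee Z_2$. The hypothesis says that $Z_1$ is a deterministic function of $(Y, W)$, so $H(Z_1 \cond Y, W, Q) = 0$ for every auxiliary $Q$. Combined with the fact that $W$ is itself a function of $Z_1$, this gives the key identity $H(Z_1 \cond Y, Q) = H(W \cond Y, Q)$, and in particular
\[
I(Z_1; Z_2 \cond Y, Q) \;\leq\; H(Z_1 \cond Y, Q) \;=\; H(W \cond Y, Q) \;\leq\; H(W \cond Q).
\]
Both the conditional and unconditional form of this inequality will be used.

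Next, since the inner bound is downward-closed in $R$ and upward-closed in $(\Rdk 1, \Rdk 2)$, it suffices to show that for every joint pmf $p(q, x)$ the corner $(R, \Rdk 1, \Rdk 2) = (H(Y\cond Q),\, I(Y; Z_1 \cond Q),\, I(Y; Z_2 \cond Q))$ of the outer bound's orthant lies in the inner bound. I would plug this corner into the six inequalities of Theorem~\ref{thm:dcCapa_2} with the auxiliary choice $U \defas (Q, W)$. Because $W$ is a function of both $Z_1$ and $Z_2$, the relevant quantities simplify as $H(Y \cond Z_i, U) = H(Y \cond Z_i, Q)$, $H(Y \cond Z_1, Z_2, U) = H(Y \cond Z_1, Z_2, Q)$, and $I(Z_1; Z_2 \cond U) = I(Z_1; Z_2 \cond Q) - H(W \cond Q)$. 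Conditions~\eqref{eq:dcCapa2_cond1}, \eqref{eq:dcCapa2_cond3}, and~\eqref{eq:dcCapa2_cond4} then reduce immediately to trivial or outer-bound identities. For~\eqref{eq:dcCapa2_cond2} and~\eqref{eq:dcCapa2_cond5}, using the decomposition $I(Y; Z_1 \cond Q) + I(Y; Z_2 \cond Q) = I(Y; Z_1, Z_2 \cond Q) + I(Z_1; Z_2 \cond Q) - I(Z_1; Z_2 \cond Y, Q)$, both reduce to the weaker form $I(Z_1; Z_2 \cond Y, Q) \leq H(W \cond Q)$. Condition~\eqref{eq:dcCapa2_cond6}, after applying the chain rule $H(Y \cond Q, W) + H(W \cond Q) = H(Y \cond Q) + H(W \cond Y, Q)$, reduces to the sharper $I(Z_1; Z_2 \cond Y, Q) \leq H(W \cond Y, Q)$, again handled by the key identity.

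The main obstacle will be the bookkeeping in condition~\eqref{eq:dcCapa2_cond6}: one must resist collapsing $H(Y \cond Q, W)$ too early and instead apply the chain rule in the form that produces $H(W \cond Y, Q)$, since the weaker unconditional bound $H(W \cond Q)$ is no longer tight enough here. The hypothesis enters solely through the equality $H(Z_1 \cond Y, W) = 0$, and its asymmetry between $Z_1$ and $Z_2$ is exactly why only one of the two refinement conditions is needed.
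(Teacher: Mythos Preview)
Your proposal is correct and uses the same core idea as the paper: choose the auxiliary $U$ in Theorem~\ref{thm:dcCapa_2} to contain $W = Z_1 \vee Z_2$, so that the hypothesis $H(Z_1 \cond Y, W) = 0$ neutralizes the residual $I(Z_1;Z_2 \cond Y, \cdot)$ term.

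The packaging differs. The paper first specializes Theorem~\ref{thm:dcCapa_2} by freezing $R = H(Y)$, collapsing the six conditions to the four-inequality ``roof'' region (Corollary~\ref{thm:dcCapa_2_roof}), and then invokes the alternative form of~\eqref{eq:dcCapa2_cond6} from Remark~\ref{rem:cond6_alt} so that the hypothesis kills $I(Z_1;Z_2 \cond U, Y)$ \emph{exactly} with $U = W$; time-sharing is appended at the end. You instead incorporate $Q$ into the auxiliary from the outset, set $U = (Q, W)$, and verify all six inequalities directly at the outer-bound corner, reducing the binding ones to $I(Z_1;Z_2 \cond Y, Q) \leq H(W \cond Y, Q)$. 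Your route is a bit more bookkeeping-heavy (six checks instead of one), but it avoids the detour through the intermediate corollary and makes explicit which of the original six conditions is actually tight under the hypothesis. Either way, the substance---the auxiliary choice and the role of the common part $W$---is identical.
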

The theorem is proved by specializing Theorem~\ref{thm:dcCapa_2} as detailed in Subsection~\ref{sec:proof_dcCapa_2_YZ1Z2}. In the case where $Z_1$ or $Z_2$ is a degraded version of $Y$ alone, achievability follows by setting $U=\emptyset$ in Theorem~\ref{thm:dcCapa_2}. Otherwise, we let $U= Z_1 \vee Z_2$. This is intuitive, since $U$ corresponds to the common-message step in the Marton encoding scheme.

\begin{example} \label{ex:2dc} 
Consider the deterministic channel depicted in Figure~\ref{fig:exampleDisturbanceChannel2}. The desired receiver output $Y$ is a refinement of both side receiver outputs $Z_1$ and $Z_2$, and hence, Theorem~\ref{thm:dcCapa_2_YZ1Z2} applies.  Figure~\ref{fig:exampleDisturbanceChannel2_region} depicts the rate--disturbance region, numerically approximated by evaluating each grid point in a regular grid over the distributions $p(x)$ and subsequently taking the convex hull. Figure~\ref{fig:exampleDisturbanceChannel2_cut1} contrasts the single-constraint case (where $\Rdk 2$ is set to infinity, and thus inactive) with the case where both side receivers are under the same disturbance rate constraint ($\Rdk 1 = \Rdk 2$). As expected, imposing an additional disturbance constraint can significantly reduce the achievable message rate. Finally, Figure~\ref{fig:exampleDisturbanceChannel2_cut2} illustrates the trade-off between the disturbance rates $\Rdk1$ and $\Rdk2$ at the two side receivers, for a fixed data rate $R$. 
\end{example}

\begin{figure}[htb]
	\centering
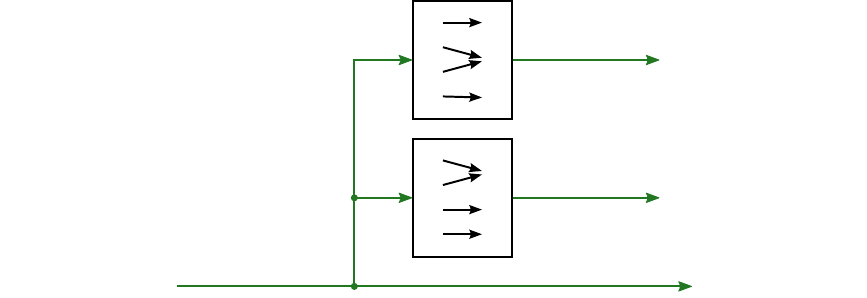%

	\caption{Deterministic channel with two disturbance constraints (Example~\ref{ex:2dc}).}
	\label{fig:exampleDisturbanceChannel2}
\end{figure}

\begin{figure}[htb] 
	\centering
	\subfigure[Rate--disturbance region]{
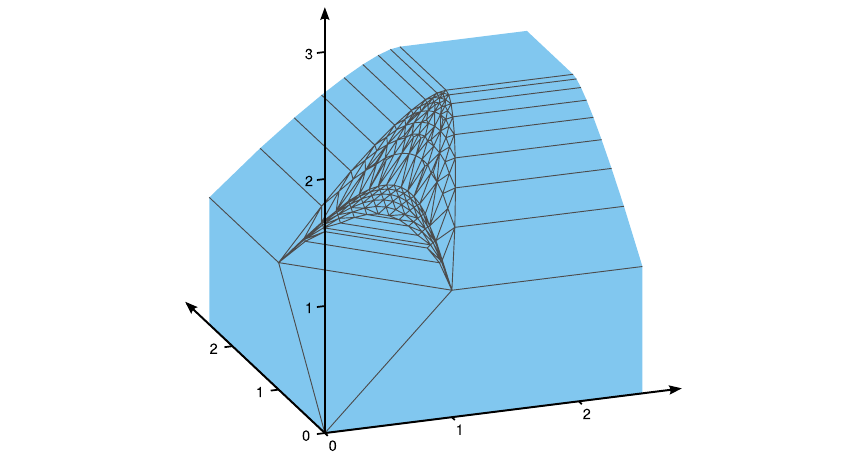%
 
		\label{fig:exampleDisturbanceChannel2_region}
	}
	\subfigure[Single disturbance constraint ($\Rdk 1 = \Rd$, $\Rdk 2 = \infty$) and symmetric disturbance constraint ($\Rdk 1 = \Rdk 2 = \Rd$).]{
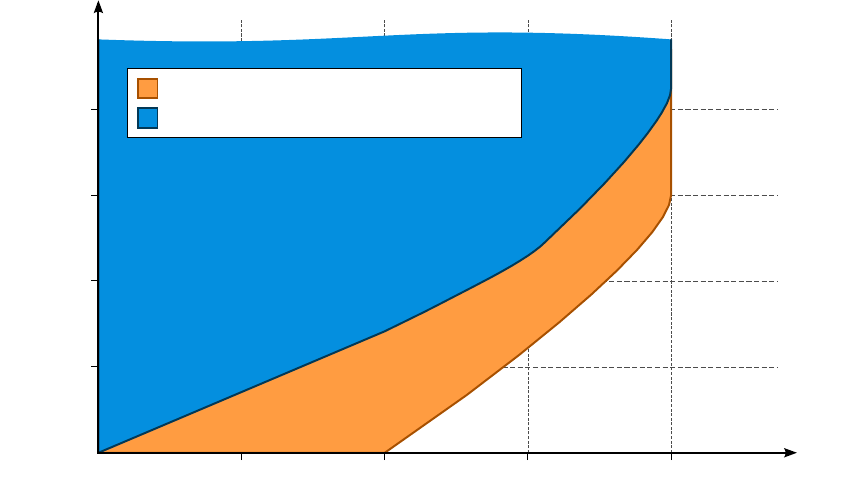%
 
		\label{fig:exampleDisturbanceChannel2_cut1}
	}
	\subfigure[Contour lines of the rate--disturbance region at constant rate $R$.]{
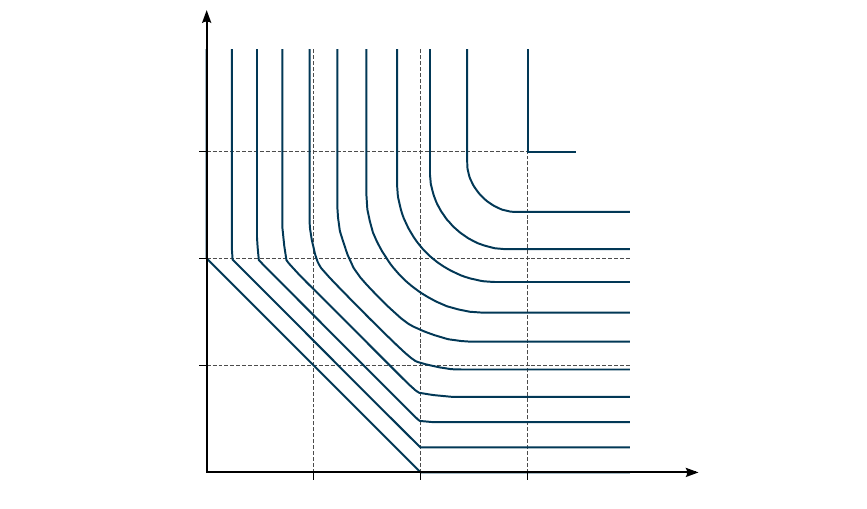%
 
		\label{fig:exampleDisturbanceChannel2_cut2}
	}
	\caption{Rate--disturbance region for Example~\ref{ex:2dc}. }
	\label{fig:example_2dc}
\end{figure}

We conclude this section by considering another case in which we can fully characterize the rate--disturbance region of the deterministic channel with two disturbance constraints. If $Z_1$ is a degraded version of $Z_2$ (or vice versa), the region $\Rr$ of Theorem~\ref{thm:dcCapa_2} is optimal and simplifies to the following.

\begin{corollary} \label{thm:dcCapa_2_degraded}
	The rate--disturbance region $\Rr$ of the deterministic channel with two disturbance constraints with $Z_1 \refines Z_2$ or $Z_2 \refines Z_1$ is the set of rate triples $(R,\Rdk 1,\Rdk 2)$ such that
	\begin{align*}
		 R  &\leq H(Y), \\ 
		 R - \Rdk 1 &\leq H(Y\cond Z_1), \\
		 R - \Rdk 2 &\leq H(Y\cond Z_2). 
	\end{align*}
	for some pmf $p(x)$.
\end{corollary}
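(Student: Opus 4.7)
The plan is to derive both directions from previously established results by making specific choices of auxiliary variables. Without loss of generality, assume $Z_1 \refines Z_2$, so that $Z_2$ is a deterministic function of $Z_1$.

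For achievability, I would invoke Theorem~\ref{thm:dcCapa_2} with $U = Z_2$, which is itself a deterministic function of $X$ and hence a legitimate choice of auxiliary variable. Since $Z_2$ is determined by $U$, the Marton penalty satisfies $I(Z_1; Z_2\mid U) = 0$, making constraint~\eqref{eq:dcCapa2_cond2} trivial. The conditional entropies simplify as $H(Y\mid Z_1, U) = H(Y\mid Z_1, Z_2) = H(Y\mid Z_1)$ (using that $Z_2$ is a function of $Z_1$) and $H(Y\mid Z_2, U) = H(Y\mid Z_2)$, so constraints~\eqref{eq:dcCapa2_cond3} and~\eqref{eq:dcCapa2_cond4} reduce to the two stated disturbance bounds. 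Constraint~\eqref{eq:dcCapa2_cond5} becomes $R - \Rdk 1 - \Rdk 2 \leq H(Y\mid Z_1)$, which is implied by~\eqref{eq:dcCapa2_cond3} and $\Rdk 2 \geq 0$, and constraint~\eqref{eq:dcCapa2_cond6} becomes the sum of~\eqref{eq:dcCapa2_cond3} and~\eqref{eq:dcCapa2_cond4}, hence redundant. What remains is exactly the claimed region.

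For the converse, I would start from the outer bound of Theorem~\ref{thm:2dc_OuterBound}. Rewriting each disturbance inequality as
\begin{align*}
R - \Rdk j \leq H(Y\mid Q) - I(Y; Z_j \mid Q) = H(Y\mid Z_j, Q), \quad j \in \{1,2\},
\end{align*}
and using that conditioning reduces entropy, I obtain $H(Y\mid Q) \leq H(Y)$ and $H(Y\mid Z_j, Q) \leq H(Y\mid Z_j)$, with the right-hand sides evaluated under the $X$-marginal of $p(q,x)$. Taking this marginal as the $p(x)$ in the corollary, all three stated inequalities are simultaneously satisfied.

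I do not anticipate a conceptual obstacle, since no new ideas beyond Theorems~\ref{thm:dcCapa_2} and~\ref{thm:2dc_OuterBound} are required. The main bookkeeping step is verifying that all six inequalities of the inner bound collapse under $U = Z_2$ to the three stated ones (or consequences thereof). The guiding intuition is that when $Z_2$ is a function of $Z_1$, the coarser side output $Z_2$ itself plays the role of the Marton common-message variable, so the inner bound collapses to pure superposition coding with $Z_2$ as the cloud center and no rate penalty for correlated satellites.
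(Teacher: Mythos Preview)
Your achievability argument is essentially identical to the paper's: both specialize Theorem~\ref{thm:dcCapa_2} with $U=Z_2$ and check that the six constraints collapse to the three claimed inequalities.

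Your converse, however, proceeds differently. The paper does not invoke Theorem~\ref{thm:2dc_OuterBound}; it argues directly from the multi-letter definitions, using Fano's inequality to get $nR \leq H(Y^n) + n\eps_n \leq nH(Y) + n\eps_n$, and then bounding $n(R-\Rdk j) \leq H(Y^n) - H(Z_j^n) + n\eps_n \leq H(Y^n\cond Z_j^n) + n\eps_n \leq nH(Y\cond Z_j) + n\eps_n$. Your route through the outer bound is equally valid and arguably cleaner, since Theorem~\ref{thm:2dc_OuterBound} has already absorbed the single-letterization work; the paper's direct argument is more self-contained. Neither converse uses the degradedness hypothesis, which is as it should be: degradedness is needed only to make the inner bound match.
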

Achievability follows as a special case of Theorem~\ref{thm:dcCapa_2}. The encoding scheme underlying the theorem carefully avoids introducing an ordering between the side receiver signals $Z_1$ and $Z_2$, but such ordering is naturally given by the channel here. Consequently, the corollary follows by setting the auxiliary $U$ equal to the output at the degraded side receiver. This turns the encoding scheme into superposition coding with three layers. The details are given in Subsection~\ref{sec:proof_dcCapa_2_degraded}. 

Note that the region of Corollary~\ref{thm:dcCapa_2_degraded} is akin to the deterministic case with one disturbance constraint in Corollary~\ref{thm:dcCapa_Det}. In both cases, the side receiver signals need not be degraded with respect to $Y$.

\clearpage
\section{Proofs for a single disturbance constraint}  \label{sec:proofs_single}
\subsection{Achievability proof of Theorem~\ref{thm:dcCapa}}   \label{sec:achievability_dcCapa}
Achievability is proved as follows. 

\vspace{2mm}
\noindent \emph{Codebook generation.}  Fix a pmf $p(u,x)$. 
\begin{enumerate}
	\item Split the message $M$ into two independent messages $M_0$ and $M_1$ with rates $R_0$ and $R_1$, respectively. Hence $R=R_0+R_1$.
	\item For each $m_0 \in \natSet{2^{nR_0}}$, independently generate a sequence $u^n(m_0)$ according to $\prod_{i=1}^n p(u_i)$.
	\item For each $(m_0,m_1) \in \natSet{2^{nR_0}}\times \natSet{2^{nR_1}}$, independently generate a sequence $x^n(m_0,m_1)$ according to $\prod_{i=1}^n p(x_i\cond u_i(m_0))$.
\end{enumerate}

\vspace{2mm}
\noindent \emph{Encoding.}  To send message $m=(m_0,m_1)$, transmit $x^n(m_0,m_1)$.

\vspace{2mm}
\noindent \emph{Decoding.} Upon receiving $y^n$, find the unique $(\hat m_0, \hat m_1)$ such that 
$(u^n(\hat m_0), x^n(\hat m_0, \hat m_1), y^n) \in \Typ(U,X,Y)$.

\vspace{2mm}
\noindent \emph{Analysis of the probability of error.} We are using a superposition code over the channel from $X$ to $Y$. Using the law of large numbers and the packing lemma in~\cite{ElGamalKim}, it can be shown that the probability of error tends to zero as $n \to \infty$ if
\begin{align}
	R_1 & < I(X;Y \cond U) - \delta(\eps),     \label{eq:R1_bound} \\
	R_0+R_1 & < I(X;Y) - \delta(\eps).    \label{eq:R0R1_bound}
\end{align}

\vspace{2mm}
\noindent \emph{Analysis of disturbance rate.} We analyze the disturbance rate averaged over codebooks $\Cc$.
\begin{align}
	I(X^n;Z^n\cond \Cc) &\leq H(Z^n, M_0\cond \Cc) - H(Z^n\cond X^n,\Cc) \notag \\
	&= H(M_0) + H(Z^n \cond M_0,\Cc) - H(Z^n\cond X^n) \notag \\
	&\annleq{a} nR_0 + H(Z^n \cond U^n) - nH(Z\cond X)  \notag  \\
	&\leq nR_0 + nH(Z\cond U) - nH(Z\cond X,U) \notag \\
	&= nR_0 + nI(X;Z\cond U) \notag \\
	&\leq n \Rd,  \label{eq:Rd_bound}
\end{align}
where (a) follows since $U^n$ is a function of the codebook $\Cc$  and $M_0$. Substituting $R=R_0+R_1$ and using Fourier--Motzkin elimination on inequalities~\eqref{eq:R1_bound},~\eqref{eq:R0R1_bound}, and~\eqref{eq:Rd_bound} completes the proof of achievability.

\subsection{Converse of Theorem~\ref{thm:dcCapa}}   \label{sec:converse_dcCapa}   

Consider a sequence of codes with $\pen \to 0$ as $n \to \infty$ and the joint pmf that it induces on $(M,X^n,Y^n,Z^n)$ assuming $M\sim \mathrm{Unif}\natSet{2^{nR}}$. Define the time-sharing random variable $Q \sim \mathrm{Unif}\natSet n$, independent of everything else. We use the identification $U = (Q,Y_{Q+1}^n,Z^{Q-1} )$, and let $X = X_Q$, $Y=Y_Q$, and $Z=Z_Q$. Note that $(X,Y,Z)$ is consistent with the channel. Then 
\begin{align*}
	R &\leq I(X;Y) + \eps_n,
\end{align*}
as in the converse proof for point-to-point channel capacity, which uses the same identifications of random variables.
On the other hand,
\begin{align*}
	n \Rd & \geq I(X^n;Z^n) \\
	&= H(Z^n) - H(Z^n\cond X^n) \\
	&= \sum_{i=1}^n \left( H(Z_i \cond Z^{i-1} ) - H(Z_i\cond X_i) \right) \\
	&\geq \sum_{i=1}^n H(Z_i \cond Z^{i-1}, Y_{i+1}^n ) - nH(Z\cond X) \\
	& = nH(Z \cond U ) - nH(Z\cond X,U) \\
	& = nI(X;Z\cond U).
\end{align*}
Finally,
{\allowdisplaybreaks
\begin{align*}  
	&n(\Rd - R) \\
	&\geq I(X^n;Z^n) - nR \\
	&\anngeq{a} H(Z^n) - H(Z^n\cond X^n) - I(M;Y^n) - n\eps_n  \\
	&\anneq{b} \sum_{i=1}^n \left( H(Z_i \cond Z^{i-1}) - I(M;Y_i\cond Y_{i+1}^n) \right) - nH(Z\cond X) - n\eps_n \\
	&= \sum_{i=1}^n \left( H(Z_i \cond Z^{i-1}, Y_{i+1}^n) + I(Y_{i+1}^n; Z_i \cond Z^{i-1}) \right. \\*
	&\qquad \left. - H(Y_i\cond Y_{i+1}^n) +H(Y_i\cond M, Y_{i+1}^n) \right) - nH(Z\cond X) - n\eps_n \\
	&\anneq{c} \sum_{i=1}^n \left( H(Z_i \cond Z^{i-1}, Y_{i+1}^n) + I(Y_i; Z^{i-1} \cond Y_{i+1}^n) \right. \\*
	&\qquad \left. - H(Y_i\cond Y_{i+1}^n) +H(Y_i\cond X_i) \right) - nH(Z\cond X) - n\eps_n \\
	&= \sum_{i=1}^n \left( H(Z_i \cond Z^{i-1}, Y_{i+1}^n) - H(Y_i \cond Z^{i-1}, Y_{i+1}^n) \right. \\*
	&\qquad \left. +H(Y_i\cond X_i,Z^{i-1}, Y_{i+1}^n) \right) - nH(Z\cond X) - n\eps_n \\
	&= \sum_{i=1}^n \left( H(Z_i \cond Z^{i-1}, Y_{i+1}^n) - I(X_i; Y_i \cond Z^{i-1}, Y_{i+1}^n)  \right) \\*
	&\qquad - nH(Z\cond X) - n\eps_n \\
	&\anneq{d} nH(Z\cond U) - nI(X;Y\cond U) - nH(Z\cond X,U)  - n\eps_n \\
	&= nI(X;Z\cond U) - I(X;Y\cond U) - n\eps_n,
\end{align*}
}%
where (a) uses Fano's inequality, (b) single-letterizes the noise term $H(Z^n\cond X^n)$ with equality due to memorylessness of the channel,  (c) applies Csisz\'ar's sum identity to the second term and channel memorylessness to the fourth term, and (d) uses the previous definitions of auxiliary random variables. Finally, the cardinality bound on $\Uc$  is established using the convex cover method in~\cite{ElGamalKim}.

\subsection{Proof of Corollary~\ref{thm:dcCapa_Det}}   \label{sec:proof_dcCapa_Det}
Using the deterministic nature of the channel, the region in  Theorem~\ref{thm:dcCapa} reduces to the set of rate pairs $(R,\Rd)$ such that 
\begin{align}
	 R  &\leq H(Y ),  
	 \\
	 \Rd &\geq H(Z\cond U),  \label{eq:detRd1} \\
	 \Rd &\geq R+ H(Z\cond U) - H(Y\cond U), \label{eq:detRd2} 
\end{align}
for some pmf $p(u,x)$. Now fixing a rate $R$ and a pmf $p(x)$ and varying $p(u|x)$ to minimize $\Rd$, the right hand sides of~\eqref{eq:detRd1} and~\eqref{eq:detRd2} are lower bounded by
\begin{align*}
	& H(Z\cond U) \geq 0, 
\end{align*}
and
\begin{align*}
	&R+ H(Z\cond U) - H(Y\cond U) \\
	& = R+ H(Z\cond U) - H(Y,Z\cond U) + H(Z\cond Y,U) \\
	&= R-H(Y\cond Z,U) + H(Z\cond Y,U) \\
	&\geq R- H(Y\cond Z).
\end{align*}
Note that the particular choice $U=Z$ simultaneously achieves both lower bounds with equality and is therefore sufficient. The rate--disturbance region thus reduces to Corollary~\ref{thm:dcCapa_Det}.

For a fixed pmf $p(x)$, this region has exactly two corner points: $P_1 = (H(Y|Z),0)$ and $P_2 = (H(Y), I(Y;Z))$. As we vary $p(x)$, there is one corner point $P_1$ that dominates all other $P_1$ points. The pmf $p(x)$ for this dominant $P_1$ can be constructed by maximizing $H(Y|Z)$ as follows. For each $z\in \Zc$, define $\Yc_z \subseteq \Yc$ to be the set of $y$ symbols that are compatible with $z$. Let $z^\star$ be a symbol that maximizes $\card{\Yc_z}$. For each element of $\Yc_{z^\star}$, pick exactly one $x$ that is compatible with it and $z^\star$. Finally, place equal probability mass on each of these $x$ values, and zero mass on all others. This pmf on $X$ yields the dominant corner point $P_1$, namely $( \log(\card{\Yc_{z^\star}}),0)$. Moreover, for this distribution, $P_2$ coincides with $P_1$. Therefore, the net contribution (modulo convexification) of each pmf $p(x)$ to the rate--disturbance region amounts to its corner point $P_2$. This implies the alternative description of the region. Lastly, the cardinality bound on $\Qc$ in the alternative description is follows from the convex cover method in~\cite{ElGamalKim}.

\subsection{Proof of Corollary~\ref{thm:dcCapa_Gaussian1}}   \label{sec:proof_dcCapa_Gaussian1}  
Achievability is straightforward using a random Gaussian codebook with power control, and upper-bounding the disturbance rate at receiver $Z$ by white Gaussian noise. The converse can be seen as follows.
Clearly, $R \leq \C(P)$. Let $\alpha^\star \in [0,1]$ be such that $R=\C(\alpha^\star P)$. Then
	\begin{align*}
		n\C(\alpha^\star P) 
		= nR & \leq I(X^n;Y^n) + n \eps_n \\
		&= h(Y^n) - h(Y^n\cond X^n) + n \eps_n,
	\end{align*}
and therefore,
	\begin{align*}
		h(Y^n) &\geq \tfrac n 2 \log(2\pi e) + n\C(\alpha^\star P) - n \eps_n \\
		&= \tfrac n 2 \log \left(2\pi e(1+\alpha^\star P)\right) - n \eps_n
	\end{align*}
	Since $N<1$, we can write the physically degraded form of the channel as $Y=X+W_1$, $Z=Y+\tilde W_2$, where $\tilde W_2 \sim \N(0,N-1)$ is the excess noise that receiver $Z$ experiences in addition to receiver $Y$. Applying the vector entropy power inequality to $Z^n = Y^n + \tilde W_2^n$, we conclude
	\begin{align*}
		\tfrac 1 n h(Z^n) 
		& \geq \tfrac 1 2 \log \left( 2^{\frac 2 n h(Y^n)} + 2^{\frac 2 n h(\tilde W_2^n)} \right) \\
		& \geq \tfrac 1 2 \log \left( 2^{-2\eps_n} \cdot 2\pi e (1+\alpha^\star P) + 2 \pi e (N-1) \right) \\
		& \geq \tfrac 1 2 \log \left( 2\pi e (N+\alpha^\star P) \right) - \eps_n,
	\end{align*}
	and finally,
	\begin{align*}
		\Rd &\geq \tfrac 1 n I(X^n;Z^n) \\
		&= \tfrac 1 n h(Z^n) - \tfrac 1 2 \log( 2\pi e N) \\
		&\geq \C(\alpha^\star P/N) - \eps_n.
	\end{align*}

\subsection{Proof of Theorem~\ref{thm:dcCapa_GaussianVector1}} \label{sec:proof_dcCapa_GaussianVector1}
Recall the shape of $\Rr(U,X)$ depicted in Figure~\ref{fig:dcCapa_constituentRegion}. The coordinates of the corner points $A$ and $B$ are given by
	\begin{align}
		A(U,X): \quad 
			R &= h(X+W_1) - h(W_1), \label{eq:A_R} \\
			\Rd &= h(X+W_2\cond U) + h(X+W_1) - h(X+W_1\cond U) - h(W_2), \label{eq:A_Rd} \\
		B(U,X): \quad 
			R &= h(X+W_1 \cond U) - h(W_1), \label{eq:B_R} \\
			\Rd &= h(X+W_2\cond U)  - h(W_2). \label{eq:B_Rd}
	\end{align}

\begin{proof}[Proof of achievability]
	We specialize Theorem~\ref{thm:dcCapa}. Consider the specific $p(u,x)$ constructed as follows. For given positive semidefinite matrices $K_u, K_v \in \Real^{n\times n}$ with $\tr(K_u+K_v)\leq P$, let
	\begin{align*}
		U & \sim \N(0, K_u), \\
		V & \sim \N(0, K_v), \\
		X & = U+V,
	\end{align*}
	where $U$ and $V$ are independent. Then, the terms in Theorem~\ref{thm:dcCapa} evaluate to 
	\begin{align*}
		I(X;Y) &= h(Y) - h(W_1) = \tfrac 1 2 \log \frac {\det{K_u+K_v+K_1}}{\det{K_1}}, \\
		I(X;Y\cond U) &= h(Y\cond U) - h(W_1) = \tfrac 1 2 \log \frac {\det{K_v+K_1}}{\det{K_1}}, \\
		I(X;Z\cond U) &= h(Z\cond U) - h(W_2) = \tfrac 1 2 \log \frac {\det{K_v+K_2}}{\det{K_2}}.
	\end{align*}
	Simplifying the right hand sides and introducing time-sharing leads to the desired result.
	
	For completeness, the coordinates of $A$ and $B$ for given matrices $K_u$, $K_v$ are
	\begin{align}
		A(K_u,K_v): \quad 
			R &= \tfrac 1 2 \log \frac{\det{K_u+K_v+K_1}}{\det{K_1}}, \label{eq:AKuKv_R} \\
			\Rd &= \tfrac 1 2 \log \frac{\det{K_v+K_2}}{\det{K_2}} \frac{\det{K_u+K_v+K_1}}{\det{K_v+K_1}}, \label{eq:AKuKv_Rd} \\
		B(K_u,K_v): \quad 
			R &= \tfrac 1 2 \log \frac{\det{K_v+K_1}}{\det{K_1}}, \label{eq:BKuKv_R} \\
			\Rd &= \tfrac 1 2 \log \frac{\det{K_v+K_2}}{\det{K_2}}. \label{eq:BKuKv_Rd} 
	\end{align}
	The constituent region $\Rr(U,X)$ for fixed $K_u$ and $K_v$ is depicted in Figure~\ref{fig:dcCapa_constituentRegion_Gaussian}.
	\begin{figure}[htb]
		\centering
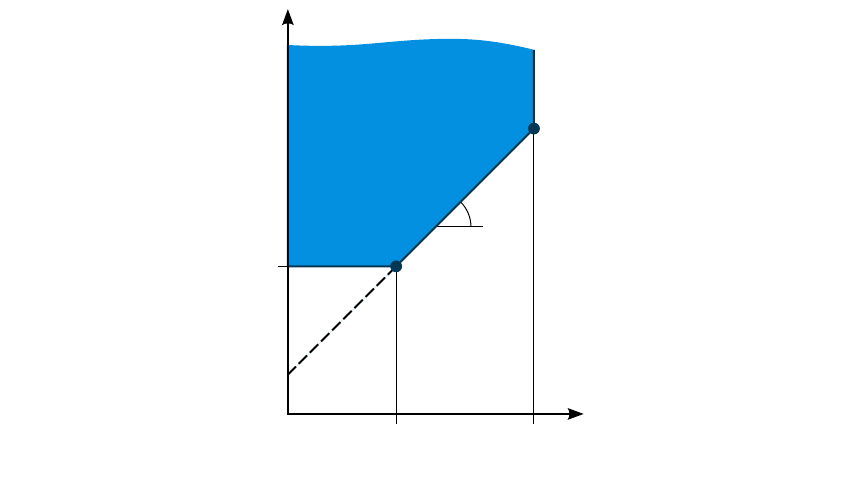%

		\caption{Constituent region for Gaussian superposition codebook with parameters $K_u$ and $K_v$.}
		\label{fig:dcCapa_constituentRegion_Gaussian}
	\end{figure}
\end{proof}

\begin{proof}[Proof of converse]
	The converse proof of Theorem~\ref{thm:dcCapa} continues to hold and we only need to show that Gaussian input distributions are sufficient. We proceed as follows. Since the rate--disturbance region is convex, its boundary can be fully characterized by maximizing $R - \lambda \Rd$ for each $\lambda>0$. We write
	\begin{align}
		R-\lambda\Rd &\leq \max_{(R,\Rd) \in \Rr} \left\{ R - \lambda \Rd \right\} \notag \\
		&= \max_{(U,X)} \max_{(R,\Rd) \in \Rr(U,X)} \left\{ R - \lambda \Rd \right\}, \notag 
	\end{align}
	where the outer optimization is over the joint distribution of $(U,X)$ and the inner optimization is over the region achieved by that distribution. The inner optimization can be solved explicitly as follows. For ease of presentation, assume for the moment that the power constraint is of the form $K_x \preceq S$ for some positive semidefinite matrix $S$. (That is, valid $K_x$ are precisely those that result in the matrix $S-K_x$ being positive semidefinite.)
	
	First, consider $\lambda \leq 1$. For any distribution $(U,X)\sim p(u,x)$, point $A(U,X)$ achieves a value of the inner optimization at least as large as point $B(U,X)$, or any point on the line between them. Using the coordinates of $A(U,X)$ in~\eqref{eq:A_R} and~\eqref{eq:A_Rd}, we can write
	\begin{align*}
		R-\lambda\Rd & \leq \max_{(U,X)} \left\{ \lambda\left( h(X+W_1\cond U) - h(X+W_2\cond U) \right) \right. \\
		& \qquad \qquad  + \left. (1-\lambda) h(X+W_1) - h(W_1) + \lambda h(W_2) \right\} \\
		& \annleq{a} \lambda \cdot \max_{(U,X)} \left\{ h(X+W_1\cond U) - h(X+W_2\cond U) \right\} \\
		& \quad + (1-\lambda) \cdot \max_{(U,X)} \left\{h(X+W_1) \right\} - h(W_1) + \lambda h(W_2) \\
		& \annleq{b} \lambda \cdot \max_{K_x \preceq S} \left\{ \tfrac 1 2 \log \frac {\det{K_x+K_1}}{\det{K_x+K_2}} \right\} + (1-\lambda) \cdot \max_{K_x\preceq S} \left\{ \tfrac 1 2 \log \left((2\pi e)^n \det{K_x+K_1}\right) \right\} \\
		& \quad - \tfrac 1 2 \log \left( (2\pi e)^n \det{K_1} \right) + \tfrac \lambda 2 \log \left( (2\pi e)^n \det{K_2} \right) .
	\end{align*}
	In (a), the two maximizations are taken independently. In step (b), the first maximization is achieved by a Gaussian $X$ that is independent of $U$, due to a theorem proved by Liu and Viswanath~\cite[Thm. 8]{Liu2007}. The optimization is now only over covariances matrices. Let $K^\star$ be an optimizer of this first maximization. The second maximization is also achieved by a Gaussian $X$, and is optimized by $K_x=S$ since $f(K_x) =  \det{K_x+K_1}$ is matrix monotone.  It follows that
	\begin{align*}
		R-\lambda\Rd 
		& \leq \tfrac \lambda 2 \log \frac {\det{K^\star+K_1}}{\det{K^\star+K_2}}  + \tfrac {1-\lambda} 2 \log \left((2\pi e)^n \det{S+K_1}\right)  \\
		& \quad - \tfrac 1 2 \log \left( (2\pi e)^n \det{K_1} \right) + \tfrac \lambda 2 \log \left( (2\pi e)^n \det{K_2} \right)  \\
		& = \tfrac 1 2 \log \frac {\det{S+K_1}}{\det{K_1}} -
		\tfrac \lambda 2 \log \frac {\det{K^\star+K_2}}{\det{K^\star + K_1}} \frac {\det{S+K_1}}{\det{K_2}}.
	\end{align*}
	But this upper bound is achieved with equality by Gaussian superposition codebooks, namely through the point $A(K_u,K_v)$ as specified by equations~\eqref{eq:AKuKv_R} and~\eqref{eq:AKuKv_Rd}, with $K_u = S - K^\star$ and $K_v = K^\star$.
	
	Now, consider $\lambda>1$. The argument proceeds analogously to the previous case. For completeness' sake, the details are as follows. We can write the inner optimization explicitly using the coordinates of $B(U,X)$ in~\eqref{eq:B_R} and~\eqref{eq:B_Rd} as 
	\begin{align*}
		R-\lambda\Rd & \leq \max_{(U,X)} \left\{ h(X+W_1\cond U) - \lambda h(X+W_2\cond U)\right\} + \lambda h(W_2) - h(W_1)  \\
		&\annleq{a} \max_{K_x \preceq S} \left \{ \tfrac 1 2 \log \left((2\pi e)^n\det{K_x+K_1} \right) - \tfrac \lambda 2 \log \left((2\pi e)^n\det{K_x+K_2} \right) \right\} \\
		& \quad + \tfrac \lambda 2 \log \left((2\pi e)^n\det{K_2} \right) - \tfrac 1 2 \log \left((2\pi e)^n\det{K_1} \right).
	\end{align*}
	The optimum in (a) is achieved by a Gaussian $X$ (independent of $U$) by virtue of~\cite[Thm. 8]{Liu2007}, while the other two terms are independent of the optimization variable. Let $K^\star$ be an optimizer. Then 
	\begin{align*}
		R-\lambda\Rd & \leq \tfrac 1 2 \log \frac{\det{K^\star+K_1}}{\det{K_1}} - \tfrac \lambda 2 \log  \frac{\det{K^\star+K_2}}{\det{K_2}}.
	\end{align*}
	This upper bound is achieved with equality by Gaussian superposition codebooks through the point $B(K_u, K_v)$ as given by equations~\eqref{eq:BKuKv_R} and~\eqref{eq:BKuKv_Rd} with $K_u = 0$ and $K_v = K^\star$. This is a power control strategy, similar to the scalar Gaussian case.
	
	We have thus shown that under a power constraint $K_x \preceq S$, Gaussian superposition codes are optimal. The conclusion extends to the sum power constraint $\tr(K_x) \leq P$ by observing that
	\begin{align*}
		 \{K_x: \tr(K_x) \leq P \} &= \bigcup_{\substack{S:\,S\succeq 0\\ \tr(S)\leq P}} \{K_x: K_x \preceq S\}.
	\end{align*}
	In other words, the sum power constraint can be expressed as a union of constraints of the type $K_x \preceq S$, for each of which Gaussian superposition codes are optimal. Therefore, a Gaussian superposition code must be optimal overall, too.
\end{proof}

\section{Proofs for two disturbance constraints}  \label{sec:proofs_two}
\subsection{Proof of Theorem~\ref{thm:dcCapa_2}} \label{sec:proof_dcCapa_2}
\noindent \emph{Codebook generation.}  Fix a pmf $p(u,x)$. Split the rate as $R=R_0+R_1+R_2+R_3$. Define the auxiliary rates $\tilde R_1 \geq R_1$ and $\tilde R_2 \geq R_2$, let $\eps'>0$, and define the set  partitions 
\begin{align*}
	\natSet{2^{n\tilde R_1}} &= \Lc_{1}(1) \cup \dots \cup \Lc_{1}(2^{nR_1}), \\
	\natSet{2^{n\tilde R_2}} &= \Lc_{2}(1) \cup \dots \cup \Lc_{2}(2^{nR_2}),
\end{align*}
where $\Lc_1(\cdot)$ and $\Lc_2(\cdot)$ are indexed sets of size $2^{n(\tilde R_1-R_1)}$ and $2^{n(\tilde R_2-R_2)}$, respectively. 
\begin{enumerate}
	\item For each $m_{0} \in \natSet{2^{nR_0}}$, generate $u^n(m_{0})$ according to $\prod_{i=1}^n p(u_i)$.
	\item For each $l_{1}\in \natSet{2^{n\tilde R_{1}}}$, generate $z^n_{1}(m_0,l_{1})$ according to $\prod_{i=1}^n p(z_{1i}\cond u_i(m_0))$. Likewise, for each $l_{2}\in \natSet{2^{n\tilde R_{2}}}$, generate $z^n_{2}(m_0, l_{2})$ according to $\prod_{i=1}^n p(z_{2i}\cond u_i(m_0))$. \label{item:MC_Gen_indep1}
	\item For each $ (m_{0},m_{1},m_{2})$, let $\Sc(m_0,m_1,m_2)$ be the set of all pairs $(l_{1},l_{2})$ from the product set $\Lc_{1}(m_{1}) \times \Lc_{2}(m_{2})$ such that $( z^n_{1}(m_0,l_{1}), z^n_{2}(m_0,l_{2}) ) \in \Typprime(Z_{1}, Z_{2} \cond u^n(m_{0}))$. 
	\item For each $(m_{0}, l_{1}, l_{2})$ and $m_{3} \in \natSet{2^{nR_{3}}} $, generate
			$x^n(m_{0}, l_1,l_2,m_3) $ according to 
			\begin{align*}
				\prod_{i=1}^n p ( x_{i} \cond u_i(m_{0}), z_{1i}(l_{1}), z_{2i}(l_{2}) )
			\end{align*}
		if $(l_1,l_2) \in \Sc(m_0,m_1,m_2)$. Otherwise, we draw from $\mathrm{Unif}(\Xc^n)$.
	\label{item:MC_Gen_superpose}
	\item Choose $(l_{1}^{(m_{0},m_{1},m_{2})} ,  l_{2}^{(m_{0},m_{1},m_{2})})$ uniformly from $\Sc(m_0,m_1,m_2)$. If $\Sc(m_0,m_1,m_2)$ is empty, choose $(1,1)$.  \label{item:MC_Gen_Error}		
\end{enumerate}

\vspace{2mm}
\noindent \emph{Encoding.}  To send message $m=(m_0,m_1,m_2,m_3)$, transmit the sequence $$x^n(m_0, l_{1}^{(m_0,m_{1},m_{2})}, l_{2}^{(m_0,m_{1},m_{2})}, m_3).$$ 

\vspace{2mm}
\noindent \emph{Decoding.} Let $\eps>\eps'$. Upon receiving $y^n$, define the tuple
\begin{align*}
	& T(m_0,m_1,m_2,m_3)  \\
	&= \left( 
		u^n(m_0), 
		z^n_{1}(m_0, l_{1}^{(m_0, m_{1}, m_{2} )}), 
		z^n_{2}(m_0, l_{2}^{(m_0, m_{1}, m_{2} )}),  \right. \\
		& \hspace{21mm} \left. x^n(m_0, l_{1}^{(m_0, m_{1}, m_{2} )}, l_{2}^{(m_0, m_{1}, m_{2} )}, m_3),  
		y^n 
	\right)
\end{align*}
Declare that $\hat m = (\hat m_0, \hat m_1, \hat m_2, \hat m_3)$ has been sent if it is the unique message such that $$ T(\hat m_0,\hat m_1,\hat m_2,\hat m_3) \in \Typ(U,Z_1,Z_2,X,Y).$$

\vspace{2mm}
\noindent \emph{Analysis of the probability of error.} Without loss of generality, assume that $m_0=m_1=m_2=m_3=1$ is transmitted. Define the following events.
\begin{align*}
	\Er_\text{e1}: & \quad \text{$\Sc(1,1,1)$ is empty}, \\
	\Er_\text{e2}: & \quad \text{$\Sc(1,1,1)$ contains two distinct pairs with} \\
		&\quad \qquad \text{equal first or second component},  \\
	\Er_i: & \quad \{ T(m_0,m_1,m_2,m_3) \in \Typ(U,Z_1,Z_2,X,Y) \text{ for } \\
		&\quad \ \, \text{some $(m_0,m_1,m_2,m_3) \in \Mc_i$} \}, \qquad\qquad i \in \{0,\dots,5\},
\end{align*}
where the message subsets $\Mc_i$ are specified in Table~\ref{tab:MC_error_cases}. Defining the ``encoding error'' event $\Er_\text{e} = \Er_\text{e1} \cup \Er_\text{e2}$ and the ``decoding error'' event $\Er_\text{d} = \Er_0^c \cup \Er_1 \cup \Er_2 \cup \Er_3 \cup \Er_4 \cup \Er_5$, the probability of error can be upper-bounded as
\begin{align*}
	\P(\Er) 
	&\leq \P( \Er_\text{e} \cup \Er_\text{d} ) 
	\leq \P( \Er_\text{e} ) + \P( \Er_\text{d} \cond \Er_\text{e}^c ) .
\end{align*}
The motivation for introducing $\Er_\text{e2}$ as an ``error'' is to simplify the analysis of the second probability term. 

We bound $\P( \Er_\text{e} )$ by the following lemma. Let $r_{1} = \tilde R_{1} - R_{1}$ and $r_{2}= \tilde R_{2} - R_{2}$.

\begin{tabelle}
	\begin{center}
	{\small
		\begin{tabular}{|c||c|c|c|c|}   
			\hline
			Message subset & $m_{0}$& $m_{1}$ & $m_{2}$ & $m_{3}$ 
			 \\ \hline \hline
			$\Mc_0$ & $\eeq 1$ & $\eeq 1$ & $\eeq 1$& $\eeq 1$ \\ \hline
			$\Mc_1$ & $\eeq 1$ & $\eeq 1$& $\eeq 1$ & $\neq 1$    \\
			$\Mc_2$ & $\eeq 1$ & $\neq 1$& $\eeq 1$ & any   \\
			$\Mc_3$ & $\eeq 1$ & $\eeq 1$& $\neq 1$ & any   \\
			$\Mc_4$ & $\eeq 1$ & $\neq 1$& $\neq 1$ & any   \\ 
			$\Mc_5$ & $\neq 1$ & any & any & any   \\ \hline
		\end{tabular}
	}
	\end{center}
	\caption{Message subsets for decoding error events.}
	\label{tab:MC_error_cases}
\end{tabelle}

\begin{lemma}\label{lemma:encodingError}
	$\P(\Er_\text{e}) \to 0$ as $n\to\infty$ if 
	\begin{align}
		r_{1} + r_{2} &> I(Z_1; Z_1 \cond U) + \delta(\eps')  \label{eq:atLeastOne}, \\
		r_{1}/2 + r_{2} &< I(Z_1; Z_2 \cond U) - \delta(\eps') \label{eq:noDoubles1}, \\
		r_{1} + r_{2}/2 &< I(Z_1; Z_2 \cond U) - \delta(\eps')  \label{eq:noDoubles2}.
	\end{align}
\end{lemma}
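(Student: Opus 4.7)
The plan is to decompose the encoding error via the union bound $\P(\Er_\text{e}) \le \P(\Er_\text{e1}) + \P(\Er_\text{e2})$ and handle each piece separately. Throughout the analysis, the event $\{U^n(1) \in \Typprime(U)\}$ has probability approaching one, so I would condition on a typical $u^n(1)$ and use the standard exponential estimates for conditional typicality.

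For $\P(\Er_\text{e1})$, I would apply the classical second-moment argument underlying the mutual covering lemma. Given $u^n(1)$, the satellite codewords $\{Z_1^n(1,l_1)\}_{l_1\in\Lc_1(1)}$ and $\{Z_2^n(1,l_2)\}_{l_2\in\Lc_2(1)}$ are conditionally independent and drawn iid from $\prod_i p(z_{1i}\cond u_i(1))$ and $\prod_i p(z_{2i}\cond u_i(1))$, respectively. Let $N$ count the pairs $(l_1,l_2)\in\Lc_1(1)\times\Lc_2(1)$ whose associated codeword pair lies in $\Typprime(Z_1,Z_2\cond u^n(1))$. A routine calculation gives $\E[N]\doteq 2^{n(r_1+r_2-I(Z_1;Z_2\cond U))}$, and the variance is dominated by pairs sharing exactly one coordinate, so $\Var(N)/(\E[N])^2 \to 0$ whenever $r_1+r_2 > I(Z_1;Z_2\cond U)+\delta(\eps')$. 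Chebyshev's inequality then yields $\P(\Er_\text{e1})=\P(N=0)\to 0$, which establishes~\eqref{eq:atLeastOne} (with the natural reading $I(Z_1;Z_2\cond U)$ on the right-hand side).

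For $\P(\Er_\text{e2})$, I would use a direct first-moment argument. Let $N_1$ be the number of triples $(l_1,l_2,l_2')\in\Lc_1(1)\times\Lc_2(1)\times\Lc_2(1)$ with $l_2\ne l_2'$ for which both $(l_1,l_2)$ and $(l_1,l_2')$ belong to $\Sc(1,1,1)$. Conditional on $(U^n(1),Z_1^n(1,l_1))$, the two requirements that $Z_2^n(1,l_2)$ and $Z_2^n(1,l_2')$ each be jointly typical with $(U^n,Z_1^n(1,l_1))$ are independent, each of conditional probability at most $2^{-n(I(Z_1;Z_2\cond U)-\delta(\eps'))}$ when $(U^n,Z_1^n(1,l_1))$ is itself typical. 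Consequently,
\begin{equation*}
\E[N_1] \;\le\; 2^{nr_1}\cdot 2^{2nr_2}\cdot 2^{-2n(I(Z_1;Z_2\cond U)-\delta(\eps'))},
\end{equation*}
which, by Markov's inequality, vanishes under~\eqref{eq:noDoubles1}. The fully symmetric argument for triples $(l_1,l_1',l_2)$ with $l_1\ne l_1'$ sharing a second coordinate gives the companion bound~\eqref{eq:noDoubles2}.

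The main technical subtlety is the second-moment computation for $\Er_\text{e1}$: one must carefully bound the covariances of the form $\Cov(\ind\{(l_1,l_2)\in\Sc\},\,\ind\{(l_1,l_2')\in\Sc\})$ arising from codeword pairs that share one coordinate. The dominant contributions are controlled by exactly the same typical-set estimate as in the $\Er_\text{e2}$ calculation, so the two portions of the proof share a common technical core. Combining the three bounds then gives $\P(\Er_\text{e})\to 0$ under~\eqref{eq:atLeastOne}--\eqref{eq:noDoubles2}, completing the proof.
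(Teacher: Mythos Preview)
Your proposal is correct and follows essentially the same route as the paper. For $\Er_\text{e1}$ the paper simply invokes the mutual covering lemma, which is exactly the second-moment argument you sketch; for $\Er_\text{e2}$ the paper fixes a row, bounds $\P(K\ge 2)\le m^2p^2$ for a binomial count, and unions over rows to obtain $2^{nr_1}\cdot 2^{2nr_2}\cdot 2^{-2nI(Z_1;Z_2\mid U)}$---the same expression your first-moment/Markov argument on collision triples produces.
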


\begin{proof}[Proof sketch]
First, consider $\Er_\text{e1}$. As in the proof of Marton's inner bound for the broadcast channel, the mutual covering lemma~\cite{ElGamalKim} implies $\P(\Er_\text{e1}) \to 0$ as $n\to \infty$ if~\eqref{eq:atLeastOne} holds.

Now consider $\Er_\text{e2}$, for which we need to control the number of typical pairs that can occur in the same ``row'' or ``column'' of the product set $\Lc_1(m_1) \times \Lc_2(m_2)$, i.e., for the same $l_{1}$ or $l_{2}$ coordinate.  The probability $\P(\Er_\text{e2})$ tends to zero provided that~\eqref{eq:noDoubles1} and~\eqref{eq:noDoubles2} hold. 

This is akin to the birthday problem~\cite{Mises1939}, where $k$ samples are drawn uniformly and independently from $\natSet N$, and the interest is in samples that have the same value (collisions). It is well-known that for the probability of collision to be $p_c$, the number of samples required is roughly $k \approx \sqrt{-2 N \ln(1-p_c)  }$, which scales with $\sqrt{N}$. 
In our case, the number of samples is the cardinality of the set $\Sc(m_0,m_1,m_2)$, which is roughly $k = 2^{n(r_{1} + r_{2}-I(Z_1; Z_2\cond U))}$. The samples are categorized into $N_1 = 2^{nr_{1}}$ and $N_2 = 2^{nr_{2}}$ classes along rows and columns, respectively. To achieve a probability of collision $p_c \to 0$ along both dimensions, we need $k \ll \min\{\sqrt{N_1},\sqrt{N_2}\}$, which yields exactly the conditions~\eqref{eq:noDoubles1} and~\eqref{eq:noDoubles2}. 

A rigorous proof is given in Appendix~\ref{sec:encodingErrorProof}.
\end{proof}

Before we proceed to bound the probability of decoding error, we need the following lemma, which is proved in Appendix~\ref{sec:independenceLemmaProof}.

\begin{lemma}[Independence lemma]  \label{lemma:independenceLemma}
	Consider a finite set $\Ac$ and a subset $\Ac' \subset \Ac$. Let $p_A$ be an arbitrary pmf over $\Ac$. Let the random vector $A^n$ be distributed proportionally to the product distribution $\prod_{l=1}^{n} p_A(a_l)$, restricted to the support set $\{a^n: a_k \in \Ac' \text{ for some $k$}\}$. Let $I$ be drawn uniformly from $\{i : A_i \in \Ac' \}$. Let $J = \left( (I+s-1) \mod n \right) + 1$ for some integer $s \in \natSet{(n-1)}$. Then, the random variables $A_I$ and $A_J$ are independent.
\end{lemma}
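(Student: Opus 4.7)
The plan is to decouple the randomness in $A^n$ from the randomness in $I$ by conditioning on a binary ``membership'' vector that records only which coordinates of $A^n$ fall in $\Ac'$.

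First I would introduce $\sigma \in \{0,1\}^n$ with $\sigma_k = \mathbf{1}\{A_k \in \Ac'\}$. The conditioning event defining the support of $A^n$ in the statement is exactly $\sigma \neq \mathbf{0}$. Given $\sigma$, the coordinates of $A^n$ remain mutually independent, with $A_k$ distributed as $p_A$ renormalized to $\Ac'$ when $\sigma_k = 1$ and to $\Ac \setminus \Ac'$ when $\sigma_k = 0$. The rule generating $I$ (``uniform on $\{i : A_i \in \Ac'\}$'') depends on $A^n$ only through $\sigma$, plus an independent uniform coin flip used to tie-break among valid indices. Consequently, conditional on $\sigma$, $I$ is uniform on $\{i : \sigma_i = 1\}$ and is independent of $A^n$.

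The pivotal observation is that $\sigma_I = 1$ by construction. Hence, conditional on $(\sigma, I)$, one has $A_I \sim p_A|_{\Ac'}$, and this distribution depends on neither $\sigma$ nor $I$. It follows that $A_I$ is independent of $(\sigma, I)$; since $J$ is a deterministic function of $I$, $A_I$ is also independent of $(\sigma, I, J)$. Separately, the hypothesis $s \in \natSet{(n-1)}$ guarantees $J \neq I$, so $A_I$ and $A_J$ are distinct coordinates of the conditionally-independent vector $A^n$ given $\sigma$, which makes them conditionally independent given $(\sigma, I, J)$.

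Assembling these pieces yields, for any $a, b \in \Ac$,
\begin{align*}
\Pr(A_I = a, A_J = b)
&= \Ex\bigl[\Pr(A_I = a \cond \sigma, I, J)\,\Pr(A_J = b \cond \sigma, I, J)\bigr] \\
&= \Pr(A_I = a)\,\Ex\bigl[\Pr(A_J = b \cond \sigma, I, J)\bigr] \\
&= \Pr(A_I = a)\,\Pr(A_J = b),
\end{align*}
where the second equality pulls out $\Pr(A_I = a \cond \sigma, I, J)$ as a constant in $(\sigma, I, J)$. The main obstacle I anticipate is justifying rigorously that conditioning on $(\sigma, I)$ does not size-bias the value of $A_I$ within $\Ac'$; this requires a careful construction of $I$ as a deterministic function of $\sigma$ and an independent uniform source, so that selecting an index with $\sigma_i = 1$ uses no information about the actual $A_i$ beyond its membership in $\Ac'$. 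Everything else — conditional independence of distinct coordinates, and the pull-out of a constant from the expectation — is a routine consequence of the product structure of $p_A^{\,n}$.
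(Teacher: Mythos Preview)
Your argument is correct, and it takes a genuinely different route from the paper. The paper proceeds by brute-force marginalization: it writes down the joint pmf of $(A^n,I,J,A_I,A_J)$, sums out $a^n$ and $i$, and then exploits the permutation invariance of the expression $\prod_l p_A(a_l)\big/\sum_k \ind_{\Ac'}(a_k)$ to collapse the sum over $i$ into $n$ copies of a single term with $a_1=y$, $a_2=z$. The remaining sum then visibly factors into a function of $y$ times a function of $z$. Your approach instead conditions on the membership pattern $\sigma$, which simultaneously (i) restores a product structure on $A^n$ and (ii) makes the index $I$ independent of the values $A^n$; the crux is that the conditional law of $A_I$ given $(\sigma,I)$ is always $p_A|_{\Ac'}$, so $A_I$ is independent of $(\sigma,I,J)$ and hence of $A_J$. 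The paper's computation has the virtue of producing the explicit joint pmf, while your conditioning argument is cleaner, isolates exactly where the independence comes from, and would extend more readily (for instance, to show that $A_I$ is independent of the entire remaining vector $(A_k)_{k\neq I}$). The one point you flag as needing care---that the uniform selection of $I$ uses only $\sigma$ and an independent source, and therefore does not bias $A_I$ within $\Ac'$---is indeed the only subtle step, and it is handled by exactly the construction you describe.
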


We bound the probability $\P( \Er_\text{d} \cond \Er_\text{e}^c )$ by the following lemma.
\begin{lemma} \label{lemma:decodingError}
	$\P(\Er_\text{d} \cond \Er_\text{e}^c) \to 0$ as $n\to\infty$ if
	\begin{align}
		R_3 &< H(Y\cond Z_1,Z_2,U) - \delta(\eps),  \label{eq:Per1}\\
		\tilde R_1 + R_3 &< H(Y\cond Z_2,U) + I(Z_1;Z_2\cond U) - \delta(\eps), \label{eq:Per2}\\
		\tilde R_2 + R_3 &< H(Y\cond Z_1,U) + I(Z_1;Z_2\cond U) - \delta(\eps), \label{eq:Per3}\\
		\tilde R_1 + \tilde R_2 + R_3&< H(Y\cond U) + I(Z_1;Z_2\cond U) - \delta(\eps), \label{eq:Per4}\\
		R_0 + \tilde R_1 + \tilde R_2 + R_3 &< H(Y) + I(Z_1;Z_2\cond U) - \delta(\eps). \label{eq:Per5}
	\end{align}  
\end{lemma}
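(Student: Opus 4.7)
The plan is to bound each term in the union bound
\begin{align*}
\P(\Er_\text{d} \cond \Er_\text{e}^c) \leq \P(\Er_0^c \cond \Er_\text{e}^c) + \sum_{i=1}^5 \P(\Er_i \cond \Er_\text{e}^c)
\end{align*}
separately and show that each vanishes as $n \to \infty$ under the stated rate constraints.

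For the missed-detection term $\P(\Er_0^c \cond \Er_\text{e}^c)$, the argument is straightforward: conditional on $\Er_\text{e}^c$, the chosen pair $(l_1^{(1,1,1)}, l_2^{(1,1,1)})$ lies in $\Sc(1,1,1)$ by construction, so that $(u^n(1), z_1^n(1, l_1^{(1,1,1)}), z_2^n(1, l_2^{(1,1,1)}))$ is jointly $\varepsilon'$-typical. The transmitted codeword $x^n$ is then an iid draw from $\prod_i p(x_i \cond u_i, z_{1i}, z_{2i})$ and $y^n$ is produced by the memoryless channel, so by the conditional typicality lemma with $\varepsilon > \varepsilon'$ the full tuple with $y^n$ is jointly $\varepsilon$-typical with probability tending to one.

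For the five false-alarm events $\Er_1, \ldots, \Er_5$, I would apply the union bound over the corresponding message subset $\Mc_i$ of Table~\ref{tab:MC_error_cases} and, for each competing message, bound the probability that its associated tuple $T(\cdot)$ is jointly typical with the received $y^n$. Each bound combines three ingredients: (i) the Independence Lemma (Lemma~\ref{lemma:independenceLemma}), invoked to establish that, conditional on $\Er_\text{e}^c$, a competing codeword index $l_1^{(1,m_1,m_2)}$ or $l_2^{(1,m_1,m_2)}$ drawn from a bin different from $\Lc_1(1)$ or $\Lc_2(1)$ is effectively uniform over its bin, so that the resulting $z_1^n$ or $z_2^n$ is independent of the transmitted one; (ii) the fact that once a pair $(l_1,l_2)$ is fixed and lies in the appropriate Marton set, the corresponding $x^n$ is an iid draw from $\prod_i p(x_i \cond u_i, z_{1i}, z_{2i})$; and (iii) the packing lemma, which bounds the joint-typicality probability of the competing tuple with $y^n$ by $2^{-n(I - \delta(\varepsilon))}$ for an appropriate mutual information determined by which components remain equal to the transmitted ones. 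For $\Mc_1$ only $x^n$ varies, yielding \eqref{eq:Per1} directly; for $\Mc_2$ and $\Mc_3$ one side codeword changes, with the rate contribution being $\tilde R_j$ rather than $R_j$, the excess $r_j = \tilde R_j - R_j$ being absorbed by the Marton penalty $I(Z_1;Z_2\cond U)$ on the right-hand side of \eqref{eq:Per2}--\eqref{eq:Per3}; the cases $\Mc_4$ and $\Mc_5$ proceed analogously, producing \eqref{eq:Per4}--\eqref{eq:Per5}.

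The main obstacle is that conditioning on the encoding event $\Er_\text{e}^c$ distorts the joint distribution of the chosen codeword indices: the transmitted pair $(l_1^{(1,1,1)}, l_2^{(1,1,1)})$ is uniform over $\Sc(1,1,1)$ rather than over the whole product $\Lc_1(1) \times \Lc_2(1)$, and a competing pair $(l_1^{(1,m_1,1)}, l_2^{(1,m_1,1)})$ is uniform over $\Sc(1,m_1,1)$. The Independence Lemma is the key device that sidesteps this: by identifying a cyclic shift of a uniformly chosen marked coordinate as independent of the marked coordinate itself, it implies that the conditional distribution of each competing codeword, given the transmitted one and $\Er_\text{e}^c$, is close enough to its unconditional product form for the packing lemma to apply, with the discrepancy absorbed into the $\delta(\varepsilon)$ slack. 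Once this decoupling is in place, the remaining steps reduce to standard typicality counting.
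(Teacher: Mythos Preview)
Your overall plan---union bound over the $\Mc_i$, conditional typicality for $\Er_0^c$, and packing-style counting for $\Er_1,\dots,\Er_5$---matches the paper's structure, but your description of the delicate cases $\Er_2$ and $\Er_3$ contains a genuine gap and misidentifies where the Independence Lemma enters.

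For $\Er_3$ (the case $m_0=m_1=1$, $m_2\neq 1$), you say that ``one side codeword changes'' and that the Independence Lemma is used for an index ``drawn from a bin different from $\Lc_1(1)$ or $\Lc_2(1)$''. Neither is quite right. Because the selection in step~\ref{item:MC_Gen_Error} is performed separately for each triple $(m_0,m_1,m_2)$, the competing index $L_1^{(1,1,m_2)}$ is re-drawn from the \emph{same} bin $\Lc_1(1)$ as the transmitted $L_1^{(1,1,1)}$; it may or may not coincide with it. The paper therefore splits $\Er_3$ according to the event $\Es=\{L_1^{(1,1,m_2)}=L_1^{(1,1,1)}\}$. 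On $\Es$ the competing tuple shares the transmitted $z_1^n$ sequence, and a direct typicality count over $\tilde R_2+R_3$ alternatives yields exactly \eqref{eq:Per3}. On $\Es^c$ the competing $l_1$ is a \emph{non-selected} index within $\Lc_1(1)$, and here the worry is that the selection procedure has correlated $Z_1^n(1,l_1)$ with the transmitted $Z_1^n(1,L_1^{(1,1,1)})$ and hence with $Y^n$. This is precisely where Lemma~\ref{lemma:independenceLemma} is invoked (together with the conditioning on $\Er_\text{e}^c$, which guarantees uniform selection over $\Sc(1,1,1)$): it shows that the non-selected $Z_1^n(1,l_1)$ remains independent of the selected one, so a packing bound over $r_1+\tilde R_2+R_3$ alternatives applies and produces a constraint dominated by \eqref{eq:Per4}.

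In short: indices from genuinely different bins are independent by construction and need no lemma; the Independence Lemma is needed for the within-the-same-bin, non-selected case that arises in $\Er_2$ and $\Er_3$, and your outline omits the $\Es/\Es^c$ split that isolates this case and actually produces conditions \eqref{eq:Per2}--\eqref{eq:Per3}.
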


\begin{proof}[Proof sketch]
The events of which $\Er_\text{d}$ is composed are  illustrated in Figure~\ref{fig:dcCodingScheme_marton}, which also depicts the structure of the codebook  for $m_0=1$. The product sets $\Lc_1(m_1)\times \Lc_2(m_2)$, for each $(m_{1},m_{2})$, are represented by shaded squares. In each product set, the sequence pair selected in step~\ref{item:MC_Gen_Error} of the codebook generation procedure is shown with its superposed $x^n$ codewords, as created in step~\ref{item:MC_Gen_superpose}. The correct codeword $x^n(1,1,1,1)$ is shown as a white circle which is connected to the received sequence $y^n$. The codewords that may be mistakenly detected at the receiver are shown as black circles. The product sets associated with decoding error events $\Er_{1}$, $\Er_{2}$, $\Er_{3}$, and $\Er_{4}$ are labeled $1$, $2$, $3$, and $4$, respectively.

\begin{figure}
	\centering
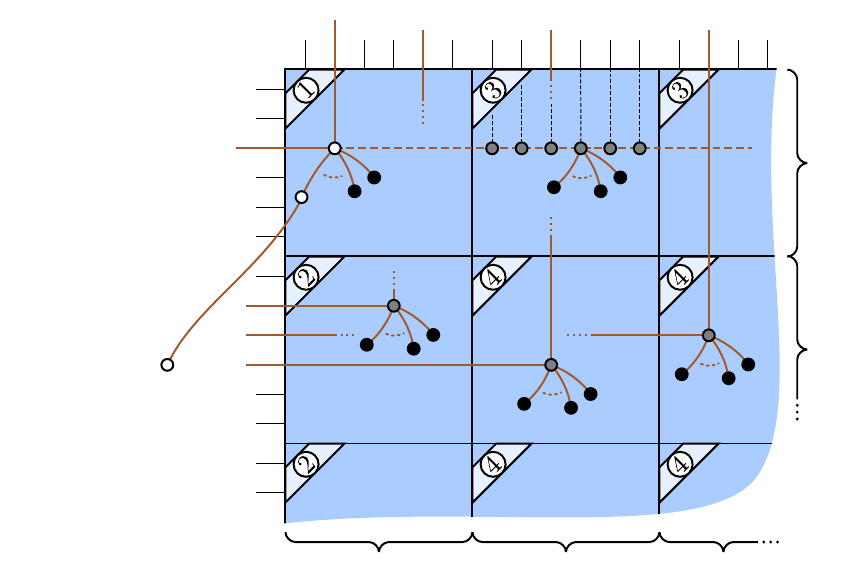%

	\caption{Illustration of decoding error events, for $m_0=1$.}
	\label{fig:dcCodingScheme_marton}
\end{figure}
We bound the probability of each sub-event of $\Er_\text{d}$. First, note that by the conditional typicality lemma in~\cite{ElGamalKim}, $\P(\Er_0^c) \to 0$ as $n\to \infty$ (this relies on $\eps'<\eps$). The probabilities of the events $\Er_1$ through $\Er_5$ conditioned on $\Er_\text{e}^c $ tend to zero as $n\to\infty$ under conditions~\eqref{eq:Per1} through~\eqref{eq:Per5}, correspondingly.

The events $\Er_2$ and $\Er_3$ require the most careful analysis, since the true codeword $x^n(1,1,1,1)$ and the codewords with which it may be confused can share the same $z_1^n$ or $z_2^n$ sequence (see dashed line and circles on it in Figure~\ref{fig:dcCodingScheme_marton}). 
Moreover, even when the chosen pairs in two different product sets do not share one of the two coordinates (see the chosen pairs for $(m_1,m_2)=(1,1)$ and $(2,1)$ in Figure~\ref{fig:dcCodingScheme_marton}), correlation could potentially be caused by the selection procedure in step~\ref{item:MC_Gen_Error} of codebook generation. We use the independence lemma (Lemma~\ref{lemma:independenceLemma}) to show that the event $\Er_\text{e}^c$ prevents this correlation leakage from occurring. The application of the lemma is what distinguishes this analysis from the conventional Marton inner bound for broadcast channels~\cite{Marton1979,ElGamalVanDerMeulen1981}. There, analysis of the selection process can be altogether avoided since each receiver decodes only one of the two coordinates.

A detailed proof for the event $\Er_3$ is given in Appendix~\ref{sec:decodingErrorProof}, the other events follow likewise.
\end{proof}

\vspace{2mm}
\noindent \emph{Analysis of disturbance rate.} When viewed by receiver $Z_1$, the codeword for message $m=(m_0,m_1,m_2,m_3)$
appears as $z_1^n(m_0, l_{1}^{(m_0,m_{1},m_{2})})$. We can pessimistically assume that all sequences $z_1^n(m_0,l_1)$ as created in step~\ref{item:MC_Gen_indep1} of codebook generation can be seen at the receiver for some message $m$.  Therefore, the number of possible sequences at $Z_1$, and thus its disturbance rate, is upper-bounded by $H(Z_1^n) \leq n(R_0 + \tilde R_1)$. 
Applying the same argument for $Z_2$, the proposed scheme achieves
\begin{align}
	R_0 + \tilde R_1 &\leq \Rdk 1, \label{eq:distRate1} \\ 
	R_0 + \tilde R_2 &\leq \Rdk 2. \label{eq:distRate2}
\end{align}

\vspace{2mm}
\noindent \emph{Conclusion of the proof.} Collecting inequalities~\eqref{eq:atLeastOne} through~\eqref{eq:distRate2}, recalling $R=R_0+R_1+R_2+R_3$, and using the Fourier-Motzkin procedure to eliminate $R_0$, $R_1$, $R_2$, and $R_3$ leads to the $(R,\Rdk1,\Rdk2)$ region claimed in the theorem.

Finally, the statement of Remark~\ref{rem:cond6_alt} follows from 
\begin{align*}
	& - I(Z_1;Z_2 \cond U) + I(Z_1;Z_2 \cond U,Y) \\
	& = - H( Z_2 \cond U) + H(Z_2 \cond U,Z_1) + H(Z_2 \cond U, Y) - H(Z_2 \cond U,Y,Z_1) \\
	& = - I(Y;Z_2 \cond U) + I(Y;Z_2\cond U,Z_1),
\end{align*}
which leads to the equality
\begin{align*}
	&  H(Y \cond Z_1,Z_2,U) + H(Y \cond U) - I(Z_1;Z_2 \cond U) + I(Z_1;Z_2 \cond U,Y) \\
	&= H(Y \cond Z_1,Z_2,U) + H(Y \cond U) - I(Y;Z_2 \cond U) + I(Y;Z_2\cond U,Z_1) \\
	&= H(Y \cond Z_1, U) +  H(Y \cond Z_2, U). 
\end{align*}

\subsection{Proof of Theorem~\ref{thm:2dc_OuterBound}} 
\label{sec:2dc_OuterBound_Proof}

First, consider 
\begin{align*}
	nR &\leq I(X^n;Y^n) + n\eps_n \\
	&= \sum_{i=1}^n I(X^n; Y_i \cond Y^{i-1}) + n\eps_n \\
	&= \sum_{i=1}^n I(X_i; Y_i \cond Y^{i-1}) + n\eps_n \\
	&= nI(X;Y\cond Q) \\
	&= nH(Y\cond Q).
\end{align*}
Furthermore,
\begin{align*}
	n \Rdk1 &\geq  I(X^n; Z_1^n) \\
	&\geq I(Y^n; Z_1^n) \\
	&= \sum_{i=1}^n I(Y_i; Z_1^n \cond Y^{i-1}) \\
	&\geq \sum_{i=1}^n I(Y_i; Z_{1i} \cond Y^{i-1}) \\
	&= nI(Y; Z_1 \cond Q),
\end{align*}
where $Y=Y_T$, $Z_1 = Z_{1T}$, and $Q = (Y^{T-1}, T)$ with $T \sim \mathrm{Unif}\natSet n$. The same argument leads to 
\begin{align*}
	n \Rdk2 &\geq nI(Y; Z_2 \cond Q),
\end{align*}
with the same random variable identifications, and the additional $Z_2 = Z_{2T}$.
Finally, the cardinality bound on $\Qc$ follows from the convex cover method in~\cite{ElGamalKim}.

\subsection{Proof of Theorem~\ref{thm:dcCapa_2_YZ1Z2}} \label{sec:proof_dcCapa_2_YZ1Z2}
First, we specialize Theorem~\ref{thm:dcCapa_2} as follows.
\begin{corollary}
\label{thm:dcCapa_2_roof} 
	The rate--disturbance region $\Rr$ of the deterministic channel with two disturbance constraints is inner-bounded by the set of rate triples $(R,\Rdk 1,\Rdk 2)$ such that
\begin{align}
		R & \leq H(Y), \label{eq:dcCapa2_roof1} \\
		\Rdk 1 &\geq I(Y; Z_1, U), \label{eq:dcCapa2_roof2} \\
		\Rdk 2 &\geq I(Y; Z_2, U), \label{eq:dcCapa2_roof3} \\
		\Rdk 1 + \Rdk 2  &\geq I(Y; Z_1, Z_2, U) + I(Y; U) + I(Z_1; Z_2\cond U)  \notag \\
		&= I(Y; Z_1, U) + I(Y; Z_2, U) + I(Z_1; Z_2\cond U, Y),  \label{eq:dcCapa2_roof4}
\end{align}
	for some pmf $p(u,x)$.
\end{corollary}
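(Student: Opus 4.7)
The plan is to show that the four--inequality region in Corollary~\ref{thm:dcCapa_2_roof} lies inside the six--inequality region of Theorem~\ref{thm:dcCapa_2} evaluated at the same pmf $p(u,x)$, so achievability follows directly from Theorem~\ref{thm:dcCapa_2}. No new coding scheme is required; the work is purely algebraic manipulation of entropies.

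First I would verify the equality stated on the right--hand side of~\eqref{eq:dcCapa2_roof4} between the two forms of the $\Rdk 1 + \Rdk 2$ bound. This reduces to the chain rule $I(Y,Z_1; Z_2\cond U) = I(Y; Z_2\cond U) + I(Z_1; Z_2\cond U, Y) = I(Z_1; Z_2\cond U) + I(Y; Z_2\cond U, Z_1)$, combined with the decomposition $I(Y; Z_1, Z_2, U) = I(Y; U) + I(Y; Z_1\cond U) + I(Y; Z_2\cond U, Z_1)$; rearrangement gives the desired identity.

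Next, fixing any $p(u,x)$, I would check that the four inequalities~\eqref{eq:dcCapa2_roof1}--\eqref{eq:dcCapa2_roof4} imply each of the six inequalities~\eqref{eq:dcCapa2_cond1}--\eqref{eq:dcCapa2_cond6}. Condition~\eqref{eq:dcCapa2_cond1} is identical to~\eqref{eq:dcCapa2_roof1}. For~\eqref{eq:dcCapa2_cond3} and~\eqref{eq:dcCapa2_cond4}, I would rewrite $I(Y; Z_j, U) = H(Y) - H(Y\cond Z_j, U)$ so that~\eqref{eq:dcCapa2_roof2} (resp.~\eqref{eq:dcCapa2_roof3}) together with~\eqref{eq:dcCapa2_roof1} yields $\Rdk j + H(Y\cond Z_j, U) \geq H(Y) \geq R$. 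Condition~\eqref{eq:dcCapa2_cond2} follows from~\eqref{eq:dcCapa2_roof4} by dropping the nonnegative terms $I(Y; Z_1, Z_2, U) + I(Y; U)$. Finally, conditions~\eqref{eq:dcCapa2_cond5} and~\eqref{eq:dcCapa2_cond6} come from combining~\eqref{eq:dcCapa2_roof4} with the identities $I(Y; Z_1, Z_2, U) + H(Y\cond Z_1, Z_2, U) = H(Y)$ and $I(Y; U) + H(Y\cond U) = H(Y)$: the right--hand sides of~\eqref{eq:dcCapa2_cond5} and~\eqref{eq:dcCapa2_cond6} become at least $H(Y) + I(Y; U) \geq R$ and $2H(Y) \geq 2R$, respectively.

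I do not anticipate a serious obstacle; each step is a routine application of standard entropy and mutual--information identities. The only mild subtlety is spotting the correct pairing of terms so that an $H(Y\,\cdot\,)$ cancels against an $I(Y;\,\cdot\,)$ to leave a clean $H(Y)$, at which point the rate bound $R \leq H(Y)$ closes the argument.
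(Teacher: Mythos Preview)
Your proposal is correct and follows essentially the same approach as the paper: both establish that, for each fixed $p(u,x)$, the constituent region of the corollary is contained in the constituent region of Theorem~\ref{thm:dcCapa_2}, so achievability is inherited. The only cosmetic difference is that the paper substitutes $R = H(Y)$ into~\eqref{eq:dcCapa2_cond1}--\eqref{eq:dcCapa2_cond6} and then observes which of the resulting $\Rdk 1,\Rdk 2$ constraints are redundant, whereas you verify the six implications directly for general $R \le H(Y)$; the algebra is the same either way.
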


The two equivalent expressions in~\eqref{eq:dcCapa2_roof4} originate from Remark~\ref{rem:cond6_alt}. An example of the constituent regions of Corollary~\ref{thm:dcCapa_2_roof} for fixed $p(u,x)$ is depicted in Figure~\ref{fig:dcCapa2_constituentRegion_roof}. The figure also illustrates how the corollary follows from Theorem~\ref{thm:dcCapa_2}: Each constituent region of the corollary is a strict subset of the constituent region of the theorem, for the same $p(u,x)$.
\begin{figure}[htb]
	\centering
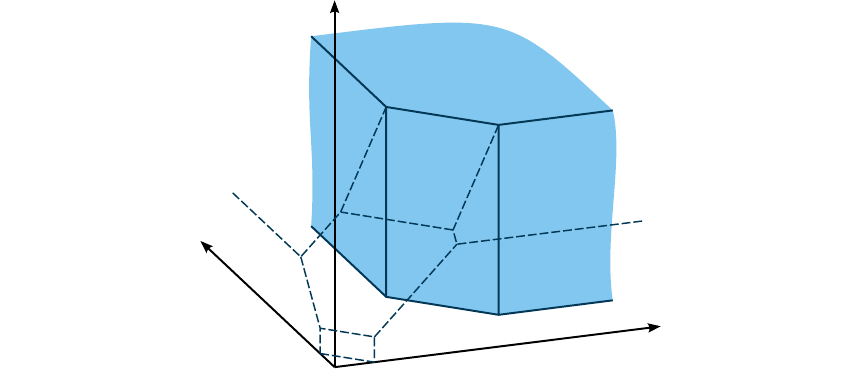%

	\caption{Constituent region for Corollary~\ref{thm:dcCapa_2_roof}, for a fixed $p(u,x)$. Each face is annotated by the inequality that defines it. For comparison, the constituent region of Theorem~\ref{thm:dcCapa_2} is shown with dashed lines (see Figure~\ref{fig:dcCapa2_constituentRegion}). }
	\label{fig:dcCapa2_constituentRegion_roof}
\end{figure}

\begin{proof}[Proof of Corollary~\ref{thm:dcCapa_2_roof}]
In Theorem~\ref{thm:dcCapa_2}, consider the case where~\eqref{eq:dcCapa2_cond1} is met with equality, i.e., $R = H(Y)$. This yields a subset region which is still achievable. It simplifies to
\begin{align}
		 \Rdk 1 + \Rdk 2 &\geq I(Z_1; Z_2\cond U), \label{eq:RHYdom1} \\
		 \Rdk 1 &\geq I(Y; Z_1, U), 
		 \\
		 \Rdk 2 &\geq I(Y; Z_2, U), 
		 \\
		 \Rdk 1 + \Rdk 2  &\geq I(Y; Z_1, Z_2, U) + I(Z_1; Z_2\cond U), \label{eq:RHYdom2} \\
		 \Rdk 1 + \Rdk 2  &\geq I(Y; Z_1, Z_2, U) + I(Y; U) + I(Z_1; Z_2\cond U) \notag 
		 \\
		 &= I(Y; Z_1, U) + I(Y; Z_2, U) + I(Z_1; Z_2\cond U, Y). \label{eq:RHYdom}
\end{align}
Clearly, conditions~\eqref{eq:RHYdom1} and~\eqref{eq:RHYdom2} are dominated by inequality~\eqref{eq:RHYdom}, and the desired result follows.
\end{proof}

\begin{proof}[Proof of achievability for Theorem~\ref{thm:dcCapa_2_YZ1Z2}] We further specialize Corollary~\ref{thm:dcCapa_2_roof}. 
We choose $U = Z_1 \vee Z_2$, i.e., the common part of $Z_1$ and $Z_2$. This implies that condition~\eqref{eq:dcCapa2_roof4} can be omitted, since $I(Z_1;Z_2 \cond U,Y)=0$ for all $p(u,x)$ by assumption. Furthermore, $U$ can be dropped from conditions~\eqref{eq:dcCapa2_roof2} and~\eqref{eq:dcCapa2_roof3} by virtue of being a function of $Z_1$ and $Z_2$. We conclude that
\begin{align}
		R & \leq H(Y), \\
		\Rdk 1 &\geq I(Y; Z_1), \\
		\Rdk 2 &\geq I(Y; Z_2), 
\end{align}
is achievable for all $p(x)$. Adding a time-sharing random variable $Q$ completes the proof.

Note that in the special case where $Y \refines Z_1$ or $Y \refines Z_2$, the same conclusion holds with the choice $U = \emptyset$.
\end{proof}

\subsection{Proof of Corollary~\ref{thm:dcCapa_2_degraded}} \label{sec:proof_dcCapa_2_degraded}
\begin{proof}[Proof of achievability]
We prove the result for $Z_1 \refines Z_2$, the other case follows by symmetry. 
We specialize the achievable region of Theorem~\ref{thm:dcCapa_2} by choosing $U = Z_2$.
The rate--disturbance constraints are
	\begin{align}
		 R  &\leq H(Y),  \\
		 \Rdk 1 + \Rdk 2 &\geq 0, \label{eq:Z1Z2degrpf_1} \\
		 R - \Rdk 1 &\leq H(Y\cond Z_1), \label{eq:Z1Z2degrpf_2} \\
		 R - \Rdk 2 &\leq H(Y\cond Z_2), \label{eq:Z1Z2degrpf_3} \\
		 R - \Rdk 1 - \Rdk 2  &\leq H(Y\cond Z_1), \label{eq:Z1Z2degrpf_4} \\
		 2R - \Rdk 1 - \Rdk 2  &\leq H(Y\cond Z_1)  + H(Y\cond Z_2) \label{eq:Z1Z2degrpf_5}.
	\end{align}
Clearly,~\eqref{eq:Z1Z2degrpf_1} is vacuous. Furthermore,~\eqref{eq:Z1Z2degrpf_4} is dominated by~\eqref{eq:Z1Z2degrpf_2}, and~\eqref{eq:Z1Z2degrpf_5} is dominated by the sum of~\eqref{eq:Z1Z2degrpf_2} and~\eqref{eq:Z1Z2degrpf_3}. This completes the proof.
\end{proof}

\begin{proof}[Proof of converse] The first inequality follows from Fano's inequality as 
	\begin{align*}
		nR &\leq I(X^n;Y^n) + n\eps_n \\
		&= H(Y^n) + n\eps_n \\
		&\leq nH(Y) + n\eps_n,
	\end{align*}
where $Y = Y_Q$ and $Q \sim \mathrm{Unif}\natSet n$. The other two inequalities follow as 
	\begin{align*}
		n(R - \Rdk1)
		&\leq nR - I(X^n;Z_1^n) \\
		&\leq H(Y^n) - H(Z_1^n) + n\eps_n \\
		&\leq  H(Y^n, Z_1^n) - H(Z_1^n) + n\eps_n \\
		&= H(Y^n\cond Z_1^n) + n\eps_n \\
		&\leq nH(Y\cond Z_1) + n\eps_n,
	\end{align*}
with $Z_1 = Z_{1Q}$, and likewise for $n(R - \Rdk2)$. 
\end{proof}

\section{Acknowledgments}
The authors would like to thank Pramod Viswanath and Yeow-Khiang Chia for helpful discussions, and gratefully acknowledge the Information Theoretic Inequalities Prover (Xitip)~\cite{Xitip2008,Itip}, which was used as a verification tool in some of our derivations.

\bibliographystyle{IEEEtran}
\bibliography{IEEEabrv,../unified-bibtex/references-unified,../unified-bibtex/myPublications}

\appendix

\subsection{Proof of Lemma~\ref{lemma:encodingError}}  \label{sec:encodingErrorProof}
\newcommand{\pPair}{p}
\newcommand{\nRow}{m}  
\newcommand{\nCol}{l}   
\newcommand{\kRow}{K}
The product bin $(m_{1},m_{2})=(1,1)$ for $m_0=1$ contains $\nCol \nRow$ sequence pairs, where  $\nCol = 2^{nr_{1}}$ and $\nRow = 2^{nr_{2}}$. Each pair $(Z^n_{1}(1,l_{1}), Z^n_{2}(1,l_{2}))$, for $l_{1} \in \natSet{\nCol}$ and $l_{2} \in \natSet{\nRow}$, has probability $\pPair \doteq 2^{-nI(Z_{1}; Z_{2}\cond U)}$ to be jointly typical. Now fix one coordinate, say $l_{1}=1$. The corresponding ``row'' of the bin contains $\nRow$ sequences $Z^n_2(1,l_2)$, each of which has an independent probability of $\pPair$ to be jointly typical with $Z^n_{1}(1,1)$. Let $\kRow$ be the total number of typical sequences in this row. Then 
\begin{align*}
	\P( \kRow = 0 ) & = (1-\pPair)^{\nRow}, \\
	\P( \kRow = 1 ) & = \nRow \pPair (1-\pPair)^{\nRow-1}, \\
	\P( \kRow \geq 2 ) & = 1 - (1-\pPair+\nRow \pPair)\underbrace{(1-\pPair)^{\nRow-1}}_{\geq 1-(\nRow-1)\pPair}  \\
	&\leq \nRow^2 \pPair^2.
\end{align*}
We have thus upper-bounded the probability to encounter two or more typical pairs in a single row. Consequently, the probability of two or more typical pairs occurring in \emph{any} row is upper bounded by $\nCol \nRow^2 \pPair^2$. Substituting definitions leads to the desired inequality. The same argument can be made for columns of the bin.

\subsection{Proof of independence lemma (Lemma~\ref{lemma:independenceLemma})} \label{sec:independenceLemmaProof}

	We prove the lemma for $s=1$, the remaining cases follow by symmetry. For ease of notation, define the specialized modulo operator $\modn{x} = 1+((x-1)\!\mod n)$, the indicator function $\ind_{\Ac'}(a)=1$ if $a\in \Ac'$ and 0 otherwise, and the shorthand notations $Y=A_I$ and $Z=A_J$. Notice that
	\begin{align*}
		p( a^n ) &= \begin{cases}
			\frac 1 c \prod_{l=1}^{n} p_A(a_l) & \text{if $a_k \in \Ac'$ for some $k \in \natSet n$} \\
			0 & \text{otherwise},
		\end{cases}
	\end{align*}
	where $c$ is a normalization constant, the exact value of which is not relevant. Further,
	\begin{align*}
		p(i \cond a^n ) &= \begin{cases}
			\frac 1 {\sum_{k=1}^{n} \ind_{\Ac'}(a_k)}  & \text{if $a_i \in \Ac'$} \\
			0 & \text{otherwise}.
		\end{cases}
	\end{align*}
	The joint distribution of $(A^n,I,J,Y,Z)$ is then
	\begin{align*}
		p(a^n, i, j, y, z) &= \begin{cases}
			\frac {p(a^n)}{\sum_{k=1}^{n} \ind_{\Ac'}(a_k)}  & \text{if $a_i \in \Ac'$, $a_i=y$, $a_j = z$, and $j = \modn{i+1}$} \\
			0 & \text{otherwise}.
		\end{cases}
	\end{align*}
	Partially marginalizing, it follows that
	\begin{align*}
		p(y,z) &= \sum_{i=1}^{n} \sum_{\substack{
			a^n: \ a_i \in \Ac' \\
			a_i = y \\
			a_{\modn{i+1}} = z
		}} 
		\frac {p(a^n)}{\sum_{k=1}^{n} \ind_{\Ac'}(a_k)}.
	\end{align*}
	It is clear that $p(y,z)=p(y)p(z)= 0$ if $y \notin \Ac'$. On the other hand, for $y \in \Ac'$, we have
	\begin{align*}
		p(y,z) 
		&= \sum_{i=1}^{n} \sum_{\substack{
			a^n:\  a_i = y \\
			a_{\modn{i+1}} = z
		}} 
		\frac {\prod_{l=1}^{n} p_A(a_l)}{c\, \sum_{k=1}^{n} \ind_{\Ac'}(a_k)}.
	\end{align*}
	The fraction under the sum is invariant under permutations of $a^n$. Therefore,
	\begin{align*}
		p(y,z) 
		&= \frac 1 c \sum_{i=1}^{n} \sum_{\substack{
			a^n:\  a_1 = y \\
			a_2 = z
		}} 
		\frac {\prod_{l=1}^{n} p_A(a_l)}{\sum_{k=1}^{n} \ind_{\Ac'}(a_k)} \\
		&= \frac n c \sum_{a^n = (y,z,a_3^n)} 
		\frac {\prod_{l=1}^{n} p_A(a_l)}{\sum_{k=1}^{n} \ind_{\Ac'}(a_k)} \\
		&= \frac {n \ p_A(y)\  p_A(z)} c \sum_{a_3^n \in \Ac^{n-2}} 
		\frac {\prod_{l=3}^{n} p_A(a_l)}{1 + \ind_{\Ac'}(z) + \sum_{k=3}^{n} \ind_{\Ac'}(a_k)},
	\end{align*}
	where $a_3^n$ are the last $n-2$ components of $a^n$. Observe that $p(y,z)$ separates into a function of $z$ and a function of $y$. Independence is thus established.

\subsection{Proof of Lemma~\ref{lemma:decodingError}, exemplified for $\Er_3$} \label{sec:decodingErrorProof}
We analyze the probability of $\Er_3$ as follows.
\begin{align*}
	\Er_{3} &= 
	\bigl \{ \bigl ( U^n(1), 
	Z^n_{1}(1, L_{1}^{(1,1, m_{2} )}), 
	Z^n_{2}(1, L_{2}^{(1,1, m_{2} )}),  \\
	& \qquad X^n(1,L_{1}^{(1,1, m_{2} )}, L_{2}^{(1,1, m_{2} )}, m_3),  
	Y^n \bigr) \in \Typ,  \\
	& \qquad \text{for some $m_{2}\neq 1$, $m_3$} \bigr\} \\
	&\subseteq 
	\bigl\{ \bigl( U^n(1), 
	Z^n_{1}(1, L_{1}^{(1,1, m_{2} )}), 
	Z^n_{2}(1, l_2), \\  
	& \qquad X^n(1, L_{1}^{(1, 1, m_{2})}, l_2, m_3),  
	Y^n \bigr) \in \Typ, \\
	&\qquad  \text{for some $m_{2}\neq 1$, $m_3$, $l_2 \notin \Lc_2(1)$} \bigr\} ,
\end{align*}
Define the event $\Es = \{L_{1}^{(1,1, m_{2} )} = L_{1}^{(1,1, 1 )} \}$, which allows us to write $\P(\Er_3 \cond \Er_\text{e}^c) = \P(\Er_3 \cap \Es \cond \Er_\text{e}^c) + \P(\Er_3 \cap \Es^c \cond \Er_\text{e}^c) $. We consider both terms separately.
\begin{align*}
	\Er_{3} \cap \Es & \subseteq
	\bigl \{ \bigl( U^n(1), 
	Z^n_{1}(1,L_{1}^{(1,1,1 )}), 
	Z^n_{2}(1,l_2), \\  
	&\qquad X^n(1,L_{1}^{(1,1, 1)}, l_2, m_3),  
	Y^n \bigr) \in \Typ,  \\
	&\qquad \text{for some $l_2 \notin \Lc_2(1)$, $m_3$} \bigr\}.
\end{align*}
Thus,
\begin{align*}
	& \P(\Er_{3} \cap \Es \cond \Er_\text{e}^c) \\
	& \leq \hspace{-5mm} \sum_{(u^n,z_{1}^n,y^n) \in \Typ} \hspace{-8mm} 
		\P\left( U^n(1)=u^n, Z^n_{1}(1,L_{12}^{(1, 1, 1)})=z_1^n, Y^n=y^n \cond \Er_\text{e}^c \right) \\
	& \qquad \cdot 
		\sum_{l_{2}\notin \Lc_2(1)} \sum_{m_{3}=1}^{2^{nR_3}}  
		\P\bigl( 
		(u^n,z_{1}^n,Z^n_{2}(1,l_{2}), \\[-4mm]
	& \hspace{33mm} 
		X^n(1,L_{1}^{(1,1,1)}, l_{2}, m_3),y^n) \in \Typ \cond \Er_\text{e}^c \bigr) \\
	& \leq 2^{n(\tilde R_{2} + R_{3} )} \ P^\star,
\end{align*}
where $P^\star$ is shorthand for the last $\P(\cdot)$ expression. Continue with
{\allowdisplaybreaks
\begin{align*}
	P^\star 
	& = 
	\hspace{-8mm} \sum_{\substack{(z_{2}^n,x^n) \in  \Typ(\\ Z_{2},  X\cond u^n,z_{1}^n,y^n)}} \hspace{-6mm} 
		\P \left( Z_2^n(1,l_2)= z_2^n, X^n(1,L_1^{(1,1,1)},l_2,m_3)=x^n \right\vert \\*[-6mm]
	& \hspace{23mm} \left. U^n(1)=u^n, Z_1^n(L_1^{(1,1,1)})=z_1^n, Y^n=y^n, \Er_\text{e}^c \right) \\
	&\anneq{a}  \underbrace
			{\sum_{\substack{(z_{2}^n,x^n) \in  \Typ(\\ Z_{2},  X\cond u^n,z_{1}^n,y^n)}}}_
			{\mathop{\doteq} 2^{nH(X,Z_2|Z_1,Y,U)}}
		\underbrace{
			p(z_{2}^n\cond u^n)}_
			{\mathop{\doteq} 2^{-nH(Z_2|U)}} \quad
		\underbrace{
			p(x^n\cond z_1^n, z_2^n, u^n)}_
			{\mathop{\doteq} 2^{-nH(X|Z_1,Z_2,U)}} \\
	& \leq 2^{n( H(X,Z_2 | Z_1,Y,U) - H(Z_{2}|U) - H(X|Z_{1},Z_{2},U)  + \delta(\eps) )} \\
	& = 2^{n( - H(Y|Z_{1},U) - I(Z_{1}; Z_{2} | U) + \delta(\eps) )} .
\end{align*}
}%
In step (a), we have used the fact that $l_2 \notin \Lc_2(1)$, and therefore, $Z_2^n(1,l_2)$ relates to a bin other than the first one. It is independent of the conditions $Y^n=y^n$ and $\Er_\text{e}^c$, both of which relate only to the $(1,1)$ bin for $m_0=1$. A similar argument applies to the second term.

Substituting back in the previous chain of inequalities implies that $\P(\Er_{3} \cap \Es \cond \Er_\text{e}^c) \to 0$ as  $n \to \infty$ if inequality~\eqref{eq:Per3} holds.

Next, consider 
\begin{align*}
	\Er_{3} \cap \Es^c  
	&\subseteq \bigl\{ \bigl( U^n(1), 
	Z^n_{1}(1, l_1), 
	Z^n_{2}(1, l_2),  
	X^n(1, l_1, l_2, m_3), \\  
	& \hspace{8.5mm} Y^n \bigr) \in \Typ,  \text{ for some $l_1 \in \Lc_1(1) \setminus \{ L_1^{(1,1,1)} \}$,} \\
	& \hspace{42mm} \text{ $l_2 \notin \Lc_2(1)$, $m_3$ } \bigr \}  .
\end{align*}
We argue
\begin{align*}
	& \P(\Er_{3} \cap \Es^c \cond \Er_\text{e}^c) \\
	& \leq \sum_{(u^n,y^n) \in \Typ} 
		\P\left( U^n(1)=u^n, Y^n=y^n \cond \Er_\text{e}^c \right) 
		\hspace{-8mm} \sum_{l_{1}\in \Lc_1(1) \setminus \{ L_1^{(1,1,1)} \} } \hspace{-8mm} \cdots \\
	& \qquad \cdots   
		\sum_{l_{2}\notin \Lc_2(1)} 
		\sum_{m_{3}=1}^{2^{nR_3}}  
		\P\bigl( (u^n,Z_{1}^n(1,l_1),Z^n_{2}(1,l_{2}), \\[-4mm]
		& \hspace{36.5mm} X^n(1,l_1, l_{2}, m_3),y^n) \in \Typ \cond U^n(1)=u^n, Y^n=y^n, \Er_\text{e}^c \bigr) \\
	& \leq 2^{n(\tilde R_1 - R_1  + \tilde R_{2} + R_{3} )} \ P^\star,
\end{align*}
where $P^\star$ represents the last $\P(\cdot)$ expression. Finally,
\begin{align*}
	& P^\star = \hspace{-5mm}
	\sum_{\substack{(z_1^n,z_2^n,x^n) \in  \Typ(\\ Z_1, Z_{2},  X\cond u^n,y^n)}} 
		\P \bigl( Z_1^n(1,l_1)=z_1^n, Z_2^n(1,l_2)= z_2^n, \\[-7mm]
	& \hspace{40mm} X^n(1,l_1,l_2,m_3)=x^n \cond \\
	& \hspace{40mm} U^n(1)=u^n,  Y^n=y^n, \Er_\text{e}^c \bigr) \\
	& = 
	\sum_{\substack{(z_1^n,z_2^n,x^n) \in  \Typ(\\ Z_1, Z_2,  X\cond u^n,y^n)}} \ \  
	\sum_{\substack{z_2^n(l'_2),\text{ for }\\ \text{all } l'_2 \in \Lc_2(1)}}
		\P\bigl( Z_2^n(1,l'_2) = z_2^n(l'_2) \text{ for} \\[-8mm]
		& \hspace{56mm} \text{all } l'_2 \in \Lc_2(1) \cond \Er_\text{e}^c \bigr) \\[3mm]
		&\qquad \qquad \cdot \P \bigl( Z_1^n(1,l_1)=z_1^n, Z_2^n(1,l_2)= z_2^n, \\
		& \hspace{20mm} X^n(1,l_1,l_2,m_3)=x^n \cond \\
		& \hspace{20mm} U^n(1)=u^n,  Y^n=y^n, Z_2^n(1,l'_2) = z_2^n(l'_2) \\
		& \hspace{20mm} \text{for all } l'_2 \in \Lc_2(1), \Er_\text{e}^c \bigr) \\
	& \annleq{a} \hspace{-5mm} \underbrace
			{\sum_{\substack{(z_1^n,z_2^n,x^n) \in  \Typ(\\ Z_1, Z_2, X\cond u^n,y^n)}}}_
			{\mathop{\doteq} 2^{nH(X,Z_1,Z_2|Y,U)}}
		\hspace{0mm} \underbrace{
			p(z_{1}^n\cond u^n, \Er_\text{e}^c)}_
			{\substack{\text{(b)} \\[1mm] \mathop{\doteq} 2^{-nH(Z_1|U)}}} 
		\underbrace{
			p(z_{2}^n\cond u^n)}_
			{\mathop{\doteq} 2^{-nH(Z_2|U)}} 
		\underbrace{
			p(x^n\cond z_1^n, z_2^n, u^n)}_
			{\mathop{\doteq} 2^{-nH(X|Z_1,Z_2,U)}} \\
	& \leq 2^{n( H(X,Z_1,Z_2 | Y,U) - H(Z_{1}|U) - H(Z_{2}|U) - H(X|Z_{1},Z_{2},U)  + \delta(\eps) )} \\
	& = 2^{n( - H(Y|U) - I(Z_{1}; Z_{2} | U) + \delta(\eps) )} .
\end{align*}
Here, (a) uses uses the fact that for the $l_1$ indices in question, $Z_1^n(1,l_1)$ is independent of $Y^n$. This is a consequence of independence between the selected $Z_1^n(1,L_1^{(1,1,1)})$ and the other (non-selected) $Z_1^n(1,l_1)$ due to Lemma~\ref{lemma:independenceLemma}. The lemma applies because the event is conditioned  (1)  on $\Er_\text{e}^c$, which ensures that picking $L_1^{(1,1,1)}$ is uniform as required by the lemma, and (2)  on $Z_2^n(1,l'_2)$ for all $l'_2  \in \Lc_2(1)$, which provides for the qualifying set $\Ac'$ of the lemma.

Step (b) follows from 
\begin{align*}
	p(z_{1}^n \cond u^n, \Er_\text{e}^c) &= p(z_{1}^n \cond u^n) \cdot \frac { p(\Er_\text{e}^c\cond u^n, z_{1}^n ) }{ p(\Er_\text{e}^c\cond u^n) } \\
	&\leq p(z_{1}^n \cond u^n) \cdot \frac { 1 }{ p(\Er_\text{e}^c\cond u^n) } \\
	&\leq p(z_{1}^n \cond u^n) \cdot \frac { 1 }{ 1- 2^{-\delta n} } \\
	&\leq 2^{-n(H(Z_1|U)-\eps)} \cdot 2^{n \delta'} \\
	&\leq 2^{-n(H(Z_1|U)-\eps-\delta')} .
\end{align*}
Here, $\delta$ is the minimum slack of the three conditions for $\Er_\text{e}^c$ in Lemma~\ref{lemma:encodingError}. Note that for any $\delta, \delta' >0$, we can find an $N_0$ such that
\begin{align*}
	\forall n\geq N_0: \frac{1}{1-2^{-\delta n}} \leq 2^{n \delta'}.
\end{align*}
We conclude that $\P(\Er_{3} \cap \Es^c \cond \Er_\text{e}^c) \to 0$ as $n \to \infty$ if 
\begin{align*}
	\tilde R_{1} - R_{1} + \tilde R_{2} + R_3 & \leq H(Y|Q) +  I(X_{1}; X_{2} | Q) - \delta(\eps).
\end{align*}
This is an implication of~\eqref{eq:Per4} which stems from analyzing $\Er_4$, and may thus be omitted.

\end{document}